\documentclass[letterpaper, 10 pt, conference]{ieeeconf}
\IEEEoverridecommandlockouts                              % This command is only
\usepackage{amsmath} % assumes amsmath package installed
\usepackage{amssymb}  % assumes amsmath package installed
\usepackage{cite}
\usepackage{booktabs} 
\usepackage{xcolor}
\usepackage{amsfonts}
\usepackage{subcaption}
\usepackage{graphicx}      % include this line if your document contains figures
\usepackage{float}
\usepackage[fancythm,fancybb]{jphmacros2e} 
\usepackage{algorithm}
\usepackage{algpseudocode}
\usepackage{enumerate}
\usepackage{balance}
\renewcommand{\QED}{\QEDopen}

\title{\LARGE \bf
Failure-Aware Iterative Learning of State-Control Invariant Sets
}

\author{Ahmad Amine, Nick-Marios T. Kokolakis, Ugo Rosolia, Truong X. Nghiem, and Rahul Mangharam% <-this % stops a space
\thanks{This work was partially supported by US DoT Safety21 National University Transportation Center and NSF awards CISE-2431569 and 2514584.}% <-this % stops a space
\thanks{A. Amine, N.-M. T. Kokolakis and R. Mangharam are with the Department of Electrical and Systems Engineering, University of Pennsylvania, Philadelphia, PA 19104, USA
        {\tt\small \{aminea, nmkoko\}@seas.upenn.edu}}%
\thanks{U. Rosolia is with Lyric, Italia 152 Catania Italy. E-mail: ugo.rosolia@gmail.com}%
\thanks{T. X. Nghiem is with the Department of Electrical and Computer Engineering, University of Central Florida, Orlando, FL 32816, USA. E-mail: truong.nghiem@ucf.edu.}%
\thanks{This work has been submitted to the IEEE for possible publication. Copyright may be transferred without notice, after which this version may no longer be accessible.}
}

\begin{document}

\maketitle
\thispagestyle{empty}
\pagestyle{empty}

%%%%%%%%%%%%%%%%%%%%%%%%%%%%%%%%%%%%%%%%%%%%%%%%%%%%%%%%%%%%%%%%%%%%%%%%%%%%%%%%
\begin{abstract}

In this paper, we address the problem of computing maximal state-control invariant sets using failing trajectories. We introduce the concept of state-control invariance, which extends control invariance from the state space to the joint state-control space.
The maximal state-control invariant (MSCI) set simultaneously encodes the maximal control invariant set (MCI) and, for each state in the MCI, the set of control inputs that preserve invariance. We prove that the state projection of the MSCI is the MCI and the state-dependent sections of the MSCI are the admissible invariance-preserving inputs. Building on this framework, we develop a Failure-Aware Iterative Learning (FAIL) algorithm for deterministic linear time invariant systems with polytopic constraints. The algorithm iteratively updates a constraint set in the state-control space by learning predecessor halfspaces from one-step failing state-input pairs, without knowing the dynamics. For each failure, FAIL learns the violated halfspaces of the predecessor of the constraint set by a regression on failing trajectories. We prove that the learned constraint set converges monotonically to the MSCI.
Numerical experiments on a double integrator system validate the proposed approach.

\end{abstract}

\section{Introduction}
\label{sec:intro}

Safety-critical control systems require that the system state remains within a prescribed set of admissible states at all times.
Control invariant sets provide the foundational tool for this guarantee: the maximal control invariant (MCI) set $\mathcal{X}_\infty$ characterizes the largest region of the state space from which admissible inputs can keep the system within constraints indefinitely~\cite{blanchini1999set}.
These sets are central to model predictive control (MPC), where they serve as terminal constraints to ensure recursive feasibility~\cite{mayne2000constrained, borrelli2017predictive} and safety~\cite{zang2025sitlmpc}, and to safety filter design, where they certify the safety of learning-based controllers~\cite{ames2019control}.
 
Classical algorithms compute the MCI set via the one-step predecessor operator, which iteratively removes states from which no admissible input can keep the system within the constraint set~\cite{borrelli2017predictive, kolmanovsky1998theory}.
For linear systems with polytopic constraints, these algorithms are well-established and terminate in finite steps~\cite{blanchini1999set}.
Extensions to nonlinear systems employ sum-of-squares programming~\cite{korda2014convex} and neural certificates~\cite{dawson2023safe,kokolakis2025safe}.
Nevertheless, all of these methods require an accurate dynamics model, which may be unavailable or expensive to obtain.
 
Data-driven methods have emerged to address the model dependence of invariant set computation.
Building on Willems' fundamental lemma~\cite{willems2005note}, behavioral approaches enable data-driven controller design with invariance guarantees for linear systems~\cite{depersis2020formulas, bisoffi2020data}.
Direct data-driven computation of invariant sets has been explored for both linear~\cite{chen2018data} and nonlinear systems~\cite{KASHANI2025112010}.
Despite these advances, existing methods, whether model-based or data-driven, compute only the set of safe states $\mathcal{X}_\infty$, discarding all information about which control inputs at each state preserve invariance and which cause the system to leave the safe region.
 
Recent work has begun to address this limitation by defining safety directly in the joint state-action space.
He et al.~\cite{he2023sacbf} introduce state-action control barrier functions (SACBFs) that evaluate both the state and the applied input, enabling a convex safety filter with reduced online computational cost. Learning from failures has been studied in reinforcement learning and imitation learning, where failures serve as negative evidence to shape rewards, constrain value functions, or bias exploration~\cite{shiarlis2016inverse, silver2017reward, lee2018learning, gao2021failures, hertel2021learning}.
These approaches exploit failure information to learn improved policies or cost functions, but they do not generate explicit constraint sets that provably prevent the recurrence of observed failures.
 
To the best of our knowledge, an approach that iteratively learns explicit invariant constraint sets in the joint state-control space from observed failures is absent from the literature.

The contributions of this paper are twofold.
First, we introduce joint state-control invariance and prove that the maximal state-control invariant (MSCI) set recovers both the MCI and the admissible invariance-preserving inputs. We derive a predecessor operator in the joint state-control space and establish convergence of a recursive algorithm for deriving the MSCI.
Second, we develop a failure-aware iterative learning (FAIL) algorithm that learns the halfspaces that define the state-control invariant set from observed one-step failing state-input pairs.
The algorithm converges monotonically to the MSCI without system identification, requiring only failing trajectories with sufficient excitation.

The rest of this paper is organized as follows.
Section~\ref{sec:problem} formulates the learning-from-failure problem.
Section~\ref{sec:state-control-invariant} introduces joint state-control invariance.
Section~\ref{sec:learning} presents the failure-aware learning algorithm.
Section~\ref{sec:results} provides numerical validation, and Section~\ref{sec:conclusion} concludes the paper.

\paragraph*{Notation} We denote by $\mathbb{Z}^+$ and $\overline{\mathbb{Z}}_{+}$ the set of positive and non-negative integers respectively. For a set $S \subseteq \mathbb{R}^n$, $2^{S}$ denotes its power set. We use $\triangleq$ for definitional equalities. We use $(C)_j\in \mathbb{R}^{1 \times m}$ to denote the $j$th row of a matrix $C \in \mathbb{R}^{n\times m}$.

\section{Problem Formulation}
\label{sec:problem}
In this section, we state the problem of using failing trajectories to learn the maximal control invariant set and the control inputs that ensure invariance at each state in this set.

Consider the discrete-time controlled nonlinear time-invariant dynamical system given by
\begin{align}
\label{eq:dyn}
x(k+1)=f(x(k), u(k)), \quad x(0) = x_0, \quad\forall k\in\overline{\mathbb{Z}}_{+},
\end{align}
where $x(k)\in\mathbb{R}^{n_x}$ is the state and $u(k)\in\mathbb{R}^{n_u}$ is the control input at time step $k$.
The mapping $f : \mathbb{R}^{n_x} \times \mathbb{R}^{n_u} \to \mathbb{R}^{n_x}$ is Lipschitz continuous with $f(0,0) = 0$. The state is constrained to a compact set $\mathcal{X}\subset\mathbb{R}^{n_x}$ containing the origin in its interior, and the control input is constrained to a compact set $\mathcal{U}(x)\subseteq \mathbb{R}^{n_u}$, which depends on the state $x \in \mathbb{R}^{n_x}$, and satisfies $0 \in \mathcal{U}(0)$.

To state the learning-from-failure problem, we introduce the following definitions that introduce the concepts of \emph{control invariance}, \emph{maximal control invariance}, and \emph{safety}.

\begin{definition}[Control invariant set]\label{def:ci}
Consider system \eqref{eq:dyn} with state and input constraints $\mathcal{X}$ and $\mathcal{U}(x)$ for all $x \in \mathbb{R}^{n_x}$. A set $\mathcal{C}\subseteq\mathcal{X}$ is called \emph{control invariant} if, for every $x\in\mathcal{C}$, there exists $u\in\mathcal{U}(x)$ such that $f(x, u)\in\mathcal{C}$.
\end{definition}

\begin{definition}[Maximal control invariant set]\label{def:mci}
A set $\mathcal{X}_\infty\subseteq\mathcal{X}$ is called the \emph{MCI} if
\begin{enumerate}[(i)]
\item $\mathcal{X}_\infty$ is control invariant, and
\item if $\mathcal{C}\subseteq\mathcal{X}$ is any control invariant set, then $\mathcal{C}\subseteq\mathcal{X}_\infty$.
\end{enumerate}
\end{definition}
\begin{definition}[Safety]\label{def:safety}
A set $\mathcal{C}\subseteq\mathcal{X}$ is said to be \emph{safe} if it is a control invariant set with respect to~\eqref{eq:dyn} and $\mathcal{U}(x)$ for all $x \in \mathcal{C}$.
\end{definition}
It follows from Definitions~\ref{def:mci} and~\ref{def:safety} that, if the initial state $x(0) \in \mathcal{X}_\infty$, then there exists a sequence of admissible inputs that keeps the state in $\mathcal{X}$ indefinitely. Hence, $\mathcal{X}_\infty$ is safe.

The MCI set can be computed via the \emph{one-step predecessor operator}. For any set $\Omega \subseteq \mathbb{R}^{n_x}$, define
\begin{equation}\label{eq:pre-state}
    \operatorname{Pre}(\Omega) \triangleq \big\{ x \in \mathbb{R}^{n_x} : \exists\, u \in \mathcal{U}(x) \;\text{s.t.}\; f(x, u) \in \Omega \big\},
\end{equation}
as the set of all states from which there exists an admissible input that steers the system into $\Omega$ in one step. The MCI set can be obtained as the fixed point of the recursive iteration $\Omega_{k+1} = \operatorname{Pre}(\Omega_k) \cap \Omega_k$ with $\Omega_0 = \mathcal{X}$ and $k \in \mathbb{Z}^+$, where, under the compactness and continuity assumptions, it is ensured that $\lim_{k \to \infty} \bigcap_{i=0}^k \Omega_i$ is non-empty and $\mathcal{X}_\infty = \lim_{k \to \infty} \bigcap_{i=0}^k \Omega_i$ ~\cite{bertsekas1972infinite}. For deterministic linear time-invariant (LTI) systems with polytopic constraints, the recursion terminates in a finite number of steps~\cite{borrelli2017predictive}.

It is important to note that the MCI set only characterizes \emph{where} the system \emph{can} remain safe but not \emph{which} inputs at each state ensure invariance. The existential quantifier in~\eqref{eq:pre-state} discards all control information during the recursion. 

Given the dynamics $f$ and the MCI $\mathcal{X}_\infty$, one can design a state-feedback controller $\pi\colon\mathcal{X}\to \mathbb{R}^{n_u}$ such that $\pi(x) \in \mathcal{U}(x)$ and $f(x, \pi(x))\in\mathcal{X}_\infty$ for every $x\in\mathcal{X}_\infty$ \cite{kerrigan2001robust}, thereby guaranteeing safety. However, when the system dynamics are unknown, a model-based controller must rely on an approximate model $\hat{f}$. In this case, for a given $x \in \mathcal{X}$, such a controller may select an input $u \in \mathcal{U}(x)$ that drives the next state $f(x, u)$ outside $\mathcal{X}_\infty$, resulting in \emph{failure}. 

Next, we define the notions of a \textit{one-step failing state-input pair} and a \textit{failing trajectory}.

\begin{definition}[One-step failing state-input pair]\label{def:one-step-failure}
Consider system~\eqref{eq:dyn} with state and input constraints $\mathcal{X}$ and $\mathcal{U}(x)$ for all $x \in \mathbb{R}^{n_x}$. A state-input pair $(x, u)$ with $x \in \mathcal{X}$ and $u\in \mathcal{U}(x)$ is called a \emph{one-step failing state-input pair} if $f(x, u) \notin \mathcal{X}$.
\end{definition}

\begin{definition}[Failing trajectory]\label{def:failing-trajectory}
Consider system~\eqref{eq:dyn} with state and input constraints $\mathcal{X}$ and $\mathcal{U}(x)$ for all $x \in \mathcal{X}$. Let $\pi: \mathcal{X} \to \mathbb{R}^{n_u}$ be a state-feedback controller satisfying $\pi(x) \in \mathcal{U}(x)$ for all $x \in \mathcal{X}$. Given an initial state $x(0) \in \mathcal{X}$ and a horizon $L \in \mathbb{Z}^+$, define the trajectory of length $L$ under $\pi$ starting from $x(0)$ as the sequence $\mathcal{T} \triangleq \bigl\{(x(k),\, u(k))\bigr\}_{k=0}^{L-1}$,
where $u(k) = \pi(x(k))$ and $x(k+1) = f(x(k), u(k))$ for all $k \in \{0, \dots, L-1\}$. The trajectory $\mathcal{T}$ is a \textit{failing trajectory} if $(x(L-1), u(L-1))$ is a one-step failing state-input pair and there is no $k \in \{0, \dots, L-2\}$ such that $(x(k), u(k))$ is a failing state-input pair.
\end{definition}

A one-step failing state-input pair reveals that although the input is in $\mathcal{U}(x)$, it yields a next state that is not in $\mathcal{X}$. To learn from such events and prevent their recurrence, we need to characterize not only $\mathcal{X}_\infty$ but also, the set of inputs that keep the system within $\mathcal{X}_\infty$. Crucially, this should be accomplished without knowing $f$. 

Now, we define our \textit{learning-from-failure} problem.
\begin{problem}[Learning-from-failure]\label{prob:main}
Consider the dynamical system~\eqref{eq:dyn} with state constraints $\mathcal{X}$ and input constraints $\mathcal{U}(x)$ for all $x \in \mathbb{R}^{n_x}$. Let $\mathcal{X}_\infty\subseteq\mathcal{X}$ be the MCI set with respect to~\eqref{eq:dyn} and $\mathcal{U}(x)$ for all $x \in \mathbb{R}^{n_x}$. Assume that $f$ and $\mathcal{X}_\infty$ are unknown. Given trajectories that include one-step failing state-input pairs, determine the set $\mathcal{X}_\infty$ and, for each $x \in \mathcal{X}_\infty$, determine the set of control inputs $u$ in $\mathcal{U}(x)$, such that $f(x,u)$ is in $\mathcal{X}_\infty$.
\end{problem}

\section{Joint State-Control Invariance}
\label{sec:state-control-invariant}

This section introduces the concept of \textit{joint state-control invariant sets} and develops an algorithm to compute this set assuming that the dynamics $f$ is known. This assumption will be relaxed in Section~\ref{sec:learning} where we develop an algorithm for learning from failures.

\subsection{From invariant sets to invariance-preserving inputs}
\label{subsec:invariance-preserving-inputs}

The predecessor operator \eqref{eq:pre-state} discards all control information during the recursion, and hence only the set $\mathcal{X}_\infty$ is computed upon convergence. Note that the set $\mathcal{X}_\infty$ determines which states are safe, but \textit{not} which control inputs ensure safety.
To address this issue, we seek a characterization of the control inputs that render a given set of states safe.

We now define the concept of \textit{invariance-preserving input sets}.

\begin{definition}[Invariance-preserving input set]\label{def:inv-input}
Consider system \eqref{eq:dyn} and a set $C \subseteq \mathbb{R}^{n_x}$.
The \emph{invariance-preserving input set} of $C$ under dynamics $f$ is the set-valued map
$\mathcal{U}^f_{C} : \mathbb{R}^{n_x} \to 2^{\mathbb{R}^{n_u}}$
defined by
\begin{equation} \label{eq:policy-req}
    \mathcal{U}^f_{C}(x) \triangleq \big\{u\in \mathbb{R}^{n_u}
    : f(x,u)\in C\big\}, \quad x\in\mathbb{R}^{n_x}.
\end{equation}
\end{definition}
It follows from Definition~\ref{def:inv-input} that for a given state $x \in \mathbb{R}^{n_x}$, $\mathcal{U}^f_{C}(x)$ is the set of control inputs under
which the next state is in $C$. For convenience, we write $\mathcal{U}^f_\infty(x) \triangleq
\mathcal{U}^f_{\mathcal{X}_\infty}(x)$.

Given an initial state $x(0) \in \mathcal{X}_\infty$, a controller that selects $u(k)$ from
$\mathcal{U}^f_\infty(x(k)) \cap\, \mathcal{U}(x(k))$ for all $k \in \overline{\mathbb{Z}}_{+}$, ensures that $x(k)$ remains in $\mathcal{X}_\infty$. We prove this result in the following lemma.
\begin{lemma}
\label{lemma:admissible-input-to-invariance}
Consider %the time-invariant discrete-time dynamical 
system \eqref{eq:dyn}
with state and input constraints $\mathcal{X} \subset \mathbb{R}^{n_x}$ and
$\mathcal{U}(x) \subset \mathbb{R}^{n_u}$ for all $x \in \mathbb{R}^{n_x}$. Let
$\mathcal{X}_\infty \subseteq \mathcal{X}$ be the MCI
set and $\pi: \mathbb{R}^{n_x} \to \mathbb{R}^{n_u}$ be a feedback controller such that
$\pi(x) \in \mathcal{U}^f_\infty(x) \cap \mathcal{U}(x)$ for all
$x \in \mathcal{X}_\infty$ where $\mathcal{U}^f_\infty(x) \cap \mathcal{U}(x)$ is assumed to be non-empty. If $x(0) \in \mathcal{X}_\infty$, then under $\pi$, the solution sequence $x(k),\ k \in \overline{\mathbb{Z}}_{+},$ to \eqref{eq:dyn}
satisfies $x(k) \in \mathcal{X}_\infty$ and $u(k) \in \mathcal{U}(x(k))$ for all $k \in \overline{\mathbb{Z}}_{+}$.
\end{lemma}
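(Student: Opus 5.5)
The plan is a straightforward induction on $k \in \overline{\mathbb{Z}}_{+}$. The inductive claim is $P(k)$: $x(k) \in \mathcal{X}_\infty$. Once $P(k)$ holds, the input conclusion at time $k$ follows immediately, so the real content is propagating state membership one step at a time using Definition~\ref{def:inv-input}.

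\textbf{Base case.} $P(0)$ is the hypothesis $x(0) \in \mathcal{X}_\infty$.

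\textbf{Inductive step.} Assume $x(k) \in \mathcal{X}_\infty$. Since $\pi$ is defined on $\mathcal{X}_\infty$ with $\pi(x) \in \mathcal{U}^f_\infty(x) \cap \mathcal{U}(x)$, and this intersection is assumed non-empty, the closed-loop input $u(k) = \pi(x(k))$ is well defined and lies in $\mathcal{U}^f_\infty(x(k)) \cap \mathcal{U}(x(k))$. Two facts follow:
\begin{itemize}
\item $u(k) \in \mathcal{U}(x(k))$, which is the input part of the claim at time $k$;
\item $u(k) \in \mathcal{U}^f_\infty(x(k)) = \mathcal{U}^f_{\mathcal{X}_\infty}(x(k))$, so by \eqref{eq:policy-req} $f(x(k), u(k)) \in \mathcal{X}_\infty$.
\end{itemize}
By \eqref{eq:dyn}, $x(k+1) = f(x(k), u(k)) \in \mathcal{X}_\infty$, which is $P(k+1)$.

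\textbf{Conclusion.} By induction, $x(k) \in \mathcal{X}_\infty$ for all $k \in \overline{\mathbb{Z}}_{+}$. The second bullet of the inductive step then holds at every $k$, giving $u(k) \in \mathcal{U}(x(k))$ for all $k$. Since $\mathcal{X}_\infty \subseteq \mathcal{X}$, the trajectory also satisfies the state constraints.

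There is no substantive obstacle. The only point needing care is that $\pi$ is constrained only on $\mathcal{X}_\infty$, so the input property at time $k$ must be established after state membership at time $k$. This is why state membership is the inductive claim and the input property is derived from it, not the other way around. Maximality of $\mathcal{X}_\infty$ is never used; the argument works verbatim for any set $C$ and the map $\mathcal{U}^f_C$.
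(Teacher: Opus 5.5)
Your proof is correct and follows the same induction as the paper: membership $x(k)\in\mathcal{X}_\infty$ is propagated using $\pi(x(k))\in\mathcal{U}^f_\infty(x(k))$ and \eqref{eq:policy-req}, and admissibility comes from $\pi(x(k))\in\mathcal{U}(x(k))$. One minor slip: in your conclusion it is the first bullet, not the second, that gives $u(k)\in\mathcal{U}(x(k))$.
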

\begin{proof}
Note that $x(0) \in \mathcal{X}_\infty$ holds by
assumption. Suppose $x(k) \in \mathcal{X}_\infty$ for $k \in \mathbb{Z}^+$.
Since $\pi(x(k)) \in \mathcal{U}^f_\infty(x(k))$, it follows from
\eqref{eq:policy-req} that
$x(k+1) = f(x(k), \pi(x(k))) \in \mathcal{X}_\infty$. Moreover,
$\pi(x(k)) \in \mathcal{U}(x(k))$ ensures that the applied input is
admissible. Now, the result follows by induction.\frQED
\end{proof}

\begin{remark}\label{rem:nonempty}
Note that the intersection $\mathcal{U}^f_{\infty}(x) \cap \mathcal{U}(x)$ is ensured to be non-empty for all
$x \in \mathcal{X}_\infty$ since $\mathcal{X}_\infty$ is control invariant.
\end{remark}

In the next subsection, we show
that both $\mathcal{X}_\infty$ and $\mathcal{U}^f_\infty(x) \cap
\mathcal{U}(x)$ can be obtained \textit{simultaneously} by extending the
computation of invariant sets to the \textit{joint state-control space}.

\subsection{State-control sets}
\label{subsec:state-control-sets}

Let $z = \begin{bmatrix} x^\top & u^\top \end{bmatrix}^\top \in \mathbb{R}^{n_x + n_u}$ denote a state-control vector, and let $Z \subseteq \mathbb{R}^{n_x + n_u}$ be a set in the joint state-control space. We now define the two operations that allow us to decompose $Z$ into the state and control spaces, respectively.
\begin{definition}[State projection]\label{def:state-projection}
Let $Z \subseteq \mathbb{R}^{n_x + n_u}$ be a set in the joint state-control space. The \emph{state projection} of $Z$ is the mapping $\Pi_x:2^{\mathbb{R}^{n_x + n_u}} \to 2^{\mathbb{R}^{n_x}}$ given by
\begin{equation*}
    \Pi_x(Z) \triangleq \big\{ x \in \mathbb{R}^{n_x} : 
    \exists\, u \in \mathbb{R}^{n_u} \text{ s.t } \begin{bmatrix} x^\top & u^\top \end{bmatrix}^\top\in Z\big\}.
\end{equation*}
\end{definition}
\begin{definition}[$x$-section]\label{def:x-section}
Let $Z \subseteq \mathbb{R}^{n_x + n_u}$ be a set in the joint state-control space and let $\Pi_x(Z)$ be the state projection of $Z$. For a given state $x \in \Pi_x(Z)$, the \emph{x-section} of $Z$ is the mapping $\mathcal{U}_Z:\mathbb{R}^{n_x} \to 2^{\mathbb{R}^{n_u}}$ given by
\begin{equation*}
    \mathcal{U}_Z(x) \triangleq \big\{ u \in \mathbb{R}^{n_u} :
    \begin{bmatrix} x^\top & u^\top \end{bmatrix}^\top \in Z\big\}.
\end{equation*}
\end{definition}
In other words, $\Pi_x(Z)$ encompasses the states for which there exists at least one control input $u$ such that the state-control vector $\begin{bmatrix} x^\top & u^\top \end{bmatrix}^\top$ is in $Z$, while, for a given $x \in \mathbb{R}^{n_x}$, $\mathcal{U}_Z(x)$ encompasses all control inputs $u$ for which the state-control vector $\begin{bmatrix} x^\top & u^\top \end{bmatrix}^\top$ is in $Z$.

Note that we do not use the control projection $\Pi_u(Z)$ since it would yield all
control inputs in $Z$ irrespective of the state. The $x$-section
$\mathcal{U}_Z(x)$ preserves the coupling between states and controls in
$Z$. Using these operations and given the state and input constraint sets $\mathcal{X}$ and $\mathcal{U}(x)$ for all $x \in \mathcal{X}$, we construct the joint constraint set in
the state-control space
\begin{equation}\label{eq:XtimesU}
    \mathcal{Z} \triangleq \big\{z \in \mathbb{R}^{n_x + n_u} : x \in \mathcal{X},\;
    u \in \mathcal{U}(x)\big\}.
\end{equation}
By construction, $\Pi_x(\mathcal{Z}) = \mathcal{X}$ and
$\mathcal{U}_\mathcal{Z}(x) = \mathcal{U}(x)$ for all $x \in \mathcal{X}$.

\subsection{State-control invariance}
\label{subsec:sci}
In this subsection, we introduce the concept of \textit{state-control invariant sets} and develop a recursive algorithm to compute these sets.

We now define the notion of \textit{state-control invariant sets}.
\begin{definition}[State-control invariant set]\label{def:sci}
Consider system \eqref{eq:dyn} and the joint constraint set $\mathcal{Z}$ given by 
\eqref{eq:XtimesU}. A set
$\mathcal{C} \subseteq \mathcal{Z}$ is called \emph{state-control
invariant} if, for every
$z \in \mathcal{C}$,
there exists $u^+ \in \mathbb{R}^{n_u}$ such that
$\begin{bmatrix} f(x,u)^\top & (u^+)^\top \end{bmatrix}^\top
\in \mathcal{C}$.

\end{definition}
Intuitively, Definition~\ref{def:sci} implies that for a given state-control invariant set $C$, all state-control pairs $(x,u)$ in $C$ yield next states $f(x,u) \in \Pi_x(C)$ for which the x-section $\mathcal{U}_\mathcal{C}(f(x,u))$ is non-empty. Using Definition~\ref{def:sci}, we now define the concept of \textit{maximal state-control invariant sets} (MSCI).

\begin{definition}[Maximal state-control invariant set]\label{def:msci}
A %state-control invariant
set $\mathcal{Z}_\infty \subseteq \mathcal{Z}$ is called the \emph{maximal
state-control invariant set} if
\begin{enumerate}[(i)]
    \item $\mathcal{Z}_\infty$ is state-control invariant, and
    \item if $\mathcal{C} \subseteq \mathcal{Z}$ is any state-control
    invariant set, then $\mathcal{C} \subseteq \mathcal{Z}_\infty$.
\end{enumerate}
\end{definition}
Next, we compute $\mathcal{Z}_\infty$ for which the state
projection yields the MCI set, that is,
$\Pi_x(\mathcal{Z}_\infty) = \mathcal{X}_\infty$, and its $x$-sections
yield the admissible invariance-preserving inputs, that is,
$\mathcal{U}_{\mathcal{Z}_\infty}(x) = \mathcal{U}^f_\infty(x)
\cap\, \mathcal{U}(x)$ for all $x \in \mathcal{X}_\infty$.
Similarly to the state-space case, $\mathcal{Z}_\infty$ can be computed
using a predecessor operator in the joint state-control space. For any set
$\Omega \subseteq \mathbb{R}^{n_x + n_u}$, define
\begin{multline}\label{eq:pre-z}
    \operatorname{Pre}_z(\Omega)
    \triangleq \Big\{
    z \in \mathbb{R}^{n_x + n_u} :\; \exists\, u^+ \in \mathbb{R}^{n_u} \\
    \text{s.t.}\;
    \begin{bmatrix}
    f(x,u)^\top & (u^+)^\top
    \end{bmatrix}^\top
    \in \Omega
    \Big\},
\end{multline}
that is, the set of all state-control vectors for which the next state $f(x,u)$ has a non-empty $x$-section, i.e., $\mathcal{U}_\Omega(f(x,u)) \neq \emptyset$.
The MSCI is then obtained as the fixed
point of the recursive iteration
\begin{equation}\label{eq:mci-z-iteration}
    \Omega_{k+1} = \operatorname{Pre}_z(\Omega_k) \cap \Omega_k,
    \quad \Omega_0 = \mathcal{Z}, \quad\forall k\in\overline{\mathbb{Z}}_{+}.
\end{equation}
At each step, $\Omega_{k+1}$ retains only those state-control vectors
in $\Omega_k$ for which an input exists that, when paired with the next state, lies
within $\Omega_k$. State-control vectors that lack such an input
are discarded. 

% Since the sequence $\{\Omega_k\}$ is nested by
% construction ($\Omega_{k+1} \subseteq \Omega_k$), the maximal
% state-control invariant set is defined as the fixed point of this iteration
% \begin{equation*}
%     \mathcal{Z}_\infty \triangleq \bigcap_{k=0}^{\infty} \Omega_k.
% \end{equation*}

Next, we show that under the same assumptions as in the
state-space case, the recursion \eqref{eq:mci-z-iteration} converges
to a non-empty limit set $\lim_{k\to \infty} \bigcap_{i=0}^k \Omega_i$, which is the MSCI, that is, $\mathcal{Z}_\infty = \lim_{k\to \infty} \bigcap_{i=0}^k \Omega_i$.

\begin{lemma}[Convergence of predecessor recursion in state-control space]
\label{lemma:z-convergence}
Consider system \eqref{eq:dyn} with joint constraint set $\mathcal{Z}$ given by \eqref{eq:XtimesU}.
If $\mathcal{Z}$ is compact and contains the
origin in its interior, and if $f$ is Lipschitz
continuous with $f(0,0) = 0$, then the recursion
\eqref{eq:mci-z-iteration} converges to a non-empty limit set $\lim_{k\to \infty} \bigcap_{i=0}^k \Omega_i$, and this limit is the MSCI satisfying $\mathcal{Z}_\infty = \lim_{k\to \infty} \bigcap_{i=0}^k \Omega_i$ contained in $\mathcal{Z}$.
\end{lemma}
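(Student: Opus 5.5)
The plan is to follow the classical argument for the state-space recursion, adapted to the joint space, in four steps: monotonicity, closedness, compactness of the limit, and then invariance plus maximality. \textbf{Monotonicity and closedness.} By construction $\Omega_{k+1} = \operatorname{Pre}_z(\Omega_k)\cap\Omega_k \subseteq \Omega_k$, so $\bigcap_{i=0}^k \Omega_i = \Omega_k$, and the limit is $\Omega_\infty \triangleq \bigcap_{k\ge 0}\Omega_k$. Next I prove by induction that each $\Omega_k$ is compact. The base case is $\Omega_0=\mathcal{Z}$, which is compact by hypothesis. For the inductive step, write
\begin{equation*}
\operatorname{Pre}_z(\Omega_k) = \{z : f(x,u) \in \Pi_x(\Omega_k)\}.
\end{equation*}
Since $\Omega_k$ is compact, its projection $\Pi_x(\Omega_k)$ is compact, being the continuous image of a compact set. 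The preimage under the continuous map $z\mapsto f(x,u)$ is therefore closed. Intersecting with the compact set $\Omega_k$ shows $\Omega_{k+1}$ is compact.

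\textbf{Invariance of the limit.} Take $z=(x,u)\in\Omega_\infty$. For each $k$, $z\in\Omega_{k+1}\subseteq \operatorname{Pre}_z(\Omega_k)$, so there is $u_k^+$ with $(f(x,u),u_k^+)\in\Omega_k\subseteq \mathcal{Z}$. All the $u_k^+$ lie in the bounded set $\Pi_u(\mathcal{Z})$, so a subsequence converges to some $u^+$. For any fixed $m$, the tail of that subsequence lies in the closed set $\Omega_m$, because the sets are nested. Hence $(f(x,u),u^+)\in\Omega_m$ for every $m$, that is, $(f(x,u),u^+)\in\Omega_\infty$. This compactness/diagonal step is the main obstacle: the existential quantifier over $u^+$ does not commute with infinite intersections without the closedness from the previous step.

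\textbf{Maximality and non-emptiness.} Let $\mathcal{C}\subseteq\mathcal{Z}$ be state-control invariant. Then $\mathcal{C}\subseteq\Omega_0$, and if $\mathcal{C}\subseteq\Omega_k$, invariance gives $\mathcal{C}\subseteq\operatorname{Pre}_z(\mathcal{C})\subseteq\operatorname{Pre}_z(\Omega_k)$, because $\operatorname{Pre}_z$ is monotone. Hence $\mathcal{C}\subseteq\Omega_{k+1}$, and by induction $\mathcal{C}\subseteq\Omega_\infty$. Thus $\Omega_\infty$ satisfies (i) and (ii) of Definition~\ref{def:msci}, so $\mathcal{Z}_\infty=\Omega_\infty\subseteq\mathcal{Z}$; uniqueness follows from the two inclusions.

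For non-emptiness, use $f(0,0)=0$ and $0\in\operatorname{int}\mathcal{Z}$. The singleton $\{(0,0)\}$ is state-control invariant, taking $u^+=0$. By maximality it is contained in $\Omega_\infty$, so the limit is non-empty. This is the only place the origin and equilibrium hypotheses enter.
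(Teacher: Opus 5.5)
Your proof is correct and follows the same route as the paper. Both use nestedness, compactness of each $\Omega_k$ via closedness of $\operatorname{Pre}_z(\Omega_k)$, a convergent-subsequence argument for invariance of the intersection, and induction for maximality. Two of your details are sharper than the paper's: you justify closedness of $\operatorname{Pre}_z(\Omega_k)$ through compactness of $\Pi_x(\Omega_k)$, where the paper only cites continuity of $f$, and you get non-emptiness from the invariant singleton $\{(0,0)\}$ plus maximality, whereas the paper asserts an invariant neighborhood of the origin, which need not exist (e.g.\ $f(x,u)=2x$).
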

\begin{proof}
By construction, $\{\Omega_k\}_{k \geq 0}$ is a nested sequence of
sets, i.e., $\Omega_{k+1} \subseteq \Omega_k$ for all $k \geq 0$.
Since $\Omega_0 = \mathcal{Z}$ is compact and each $\Omega_{k+1}$
is obtained as the intersection of $\Omega_k$ with the closed set
$\operatorname{Pre}_z(\Omega_k)$ (closedness follows from the
continuity of $f$), each $\Omega_k$ is compact. Furthermore, since
$f(0,0) = 0$ and the origin lies in the interior of $\mathcal{Z}$,
there exists a neighborhood of the origin that is state-control
invariant, ensuring that $\Omega_k$ is nonempty for all $k \geq 0$.
It remains to show that $\mathcal{Z}_\infty$ is state-control invariant.
Let $z = \begin{bmatrix} x^\top & u^\top \end{bmatrix}^\top \in
\mathcal{Z}_\infty$. Then $z \in \Omega_k$ for all $k \geq 0$, and
in particular $z \in \Omega_{k+1} =
\operatorname{Pre}_z(\Omega_k) \cap \Omega_k$ for all $k \geq 0$.
By the definition of $\operatorname{Pre}_z$, for each $k$ there exists
$u^+_k \in \mathbb{R}^{n_u}$ such that
$\begin{bmatrix} f(x,u)^\top & (u^+_k)^\top \end{bmatrix}^\top
\in \Omega_k$. Since each $\Omega_k$ is compact, the sequence
$\{u^+_k\}$ admits a convergent subsequence with limit $u^+ \in
\mathbb{R}^{n_u}$, and by the nested compactness of the $\Omega_k$,
$\begin{bmatrix} f(x,u)^\top & (u^+)^\top \end{bmatrix}^\top
\in \mathcal{Z}_\infty$. Hence $\mathcal{Z}_\infty$ is state-control
invariant.
Maximality follows by the same argument as in the state-space
case~\cite{borrelli2017predictive}: if $\mathcal{C} \subseteq
\mathcal{Z}$ is any state-control invariant set, then by induction
$\mathcal{C} \subseteq \Omega_k$ for all $k \geq 0$, and hence
$\mathcal{C} \subseteq \mathcal{Z}_\infty$.
By the finite intersection property for compact
sets,
$\mathcal{Z}_\infty = \bigcap_{k=0}^{\infty} \Omega_k$
is nonempty and compact.\frQED
\end{proof}

The next lemma states that the projection of the MSCI $\mathcal{Z}_\infty$ is the MCI set $\mathcal{X}_\infty$ and the x-sections of $\mathcal{Z}_\infty$, for all $x \in \mathcal{X}_\infty$, are the admissible invariance-preserving inputs of $\mathcal{X}_\infty$.
\begin{lemma}[MCI and admissible invariance-preserving inputs]
\label{lemma:z-recovery}
Consider system \eqref{eq:dyn} with joint constraint set $\mathcal{Z}$ given by \eqref{eq:XtimesU}. The MSCI $\mathcal{Z}_\infty$ admits the following properties:
\begin{enumerate}[(i)]
    \item $\Pi_x(\mathcal{Z}_\infty) = \mathcal{X}_\infty$, and
    \item $\mathcal{U}_{\mathcal{Z}_\infty}(x) =
    \mathcal{U}^f_\infty(x) \cap\, \mathcal{U}(x)$
    for all $x \in \mathcal{X}_\infty$.
\end{enumerate}
\end{lemma}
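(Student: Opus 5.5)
The plan is to prove both items through a single explicit candidate. Define
\[
\mathcal{C}^\star \triangleq \big\{ z \in \mathbb{R}^{n_x+n_u} : x \in \mathcal{X}_\infty,\; u \in \mathcal{U}(x),\; f(x,u) \in \mathcal{X}_\infty \big\}.
\]
I would show that $\mathcal{C}^\star = \mathcal{Z}_\infty$. Once this holds, item (ii) is immediate, because the $x$-section of $\mathcal{C}^\star$ at $x \in \mathcal{X}_\infty$ is $\{u \in \mathcal{U}(x) : f(x,u)\in\mathcal{X}_\infty\} = \mathcal{U}^f_\infty(x)\cap\mathcal{U}(x)$ by Definition~\ref{def:inv-input}. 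Item (i) also follows, since $\Pi_x(\mathcal{C}^\star) = \mathcal{X}_\infty$: every $x\in\mathcal{X}_\infty$ has a nonempty section by Remark~\ref{rem:nonempty}, and $\mathcal{C}^\star$ contains no states outside $\mathcal{X}_\infty$.

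\textbf{Step 1: $\mathcal{C}^\star \subseteq \mathcal{Z}_\infty$.*}* First, $\mathcal{C}^\star\subseteq\mathcal{Z}$ because $\mathcal{X}_\infty\subseteq\mathcal{X}$ and $u\in\mathcal{U}(x)$. Next I check that $\mathcal{C}^\star$ is state-control invariant in the sense of Definition~\ref{def:sci}. Take $z\in\mathcal{C}^\star$ and set $x^+=f(x,u)\in\mathcal{X}_\infty$. Since $\mathcal{X}_\infty$ is control invariant (Remark~\ref{rem:nonempty}), there is some $u^+\in\mathcal{U}(x^+)$ with $f(x^+,u^+)\in\mathcal{X}_\infty$. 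Hence $[x^{+\top}\ u^{+\top}]^\top\in\mathcal{C}^\star$. Maximality of $\mathcal{Z}_\infty$ (Definition~\ref{def:msci}(ii), whose existence is guaranteed by Lemma~\ref{lemma:z-convergence}) then gives $\mathcal{C}^\star\subseteq\mathcal{Z}_\infty$.

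\textbf{Step 2: $\mathcal{Z}_\infty \subseteq \mathcal{C}^\star$.} This is the step I expect to be the main obstacle, because it needs the MCI's maximality rather than the MSCI's.

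First I show that $\Pi_x(\mathcal{Z}_\infty)$ is control invariant in the sense of Definition~\ref{def:ci}. Take $x\in\Pi_x(\mathcal{Z}_\infty)$ and pick $u$ with $z\in\mathcal{Z}_\infty\subseteq\mathcal{Z}$, so that $x\in\mathcal{X}$ and $u\in\mathcal{U}(x)$. By state-control invariance there exists $u^+$ with $[f(x,u)^\top\ u^{+\top}]^\top\in\mathcal{Z}_\infty$, which gives $f(x,u)\in\Pi_x(\mathcal{Z}_\infty)$. Thus $\Pi_x(\mathcal{Z}_\infty)\subseteq\mathcal{X}$ is control invariant, and Definition~\ref{def:mci}(ii) yields $\Pi_x(\mathcal{Z}_\infty)\subseteq\mathcal{X}_\infty$.

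Now let $z\in\mathcal{Z}_\infty$. Then $x\in\Pi_x(\mathcal{Z}_\infty)\subseteq\mathcal{X}_\infty$, and $u\in\mathcal{U}(x)$ because $z\in\mathcal{Z}$. The same invariance argument gives $f(x,u)\in\Pi_x(\mathcal{Z}_\infty)\subseteq\mathcal{X}_\infty$. Therefore $z\in\mathcal{C}^\star$.

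The only subtle point is that the input admissibility $u\in\mathcal{U}(x)$ must come from the containment $\mathcal{Z}_\infty\subseteq\mathcal{Z}$ and the construction \eqref{eq:XtimesU}, rather than from any dynamics argument.
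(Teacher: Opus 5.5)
Your proof is correct and follows essentially the same argument as the paper: it uses the same auxiliary set $\mathcal{C}^\star$, shows it is state-control invariant to get $\mathcal{C}^\star\subseteq\mathcal{Z}_\infty$ from MSCI maximality, and shows $\Pi_x(\mathcal{Z}_\infty)$ is control invariant to invoke MCI maximality. The only difference is packaging: you establish the identity $\mathcal{Z}_\infty=\mathcal{C}^\star$ once and read off (i) and (ii), whereas the paper proves each item by its own double inclusion while reusing the same set.
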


\begin{proof}
    We prove the claims separately.
 
\emph{Proof of (i).} We show the equality by double inclusion.
 
($\subseteq$) Let $x \in \Pi_x(\mathcal{Z}_\infty)$. Then there exists
$u \in \mathbb{R}^{n_u}$ such that
$z = \begin{bmatrix} x^\top & u^\top \end{bmatrix}^\top
\in \mathcal{Z}_\infty \subseteq \mathcal{Z}$,
so $x \in \mathcal{X}$ and $u \in \mathcal{U}(x)$. Since
$\mathcal{Z}_\infty$ is state-control invariant, there exists
$u^+ \in \mathbb{R}^{n_u}$ such that
$\begin{bmatrix} f(x,u)^\top & (u^+)^\top \end{bmatrix}^\top
\in \mathcal{Z}_\infty$. In particular,
$f(x,u) \in \Pi_x(\mathcal{Z}_\infty)$. By induction, the state can be kept in
$\Pi_x(\mathcal{Z}_\infty) \subseteq \mathcal{X}$ for all time via
admissible inputs (i.e., by applying $u^+$ at $x^+ = f(x,u)$), so $\Pi_x(\mathcal{Z}_\infty)$ is control invariant.
By maximality of $\mathcal{X}_\infty$, we have
$\Pi_x(\mathcal{Z}_\infty) \subseteq \mathcal{X}_\infty$.
 
($\supseteq$) Let $x \in \mathcal{X}_\infty$. Since
$\mathcal{X}_\infty$ is control invariant, there exists
$u \in \mathcal{U}(x)$ such that $f(x,u) \in \mathcal{X}_\infty$.
Define the set
$\mathcal{C} = \big\{ \begin{bmatrix} x^\top & u^\top
\end{bmatrix}^\top : x \in \mathcal{X}_\infty,\;
u \in \mathcal{U}(x),\; f(x,u) \in \mathcal{X}_\infty \big\}$.
Then $\mathcal{C} \subseteq \mathcal{Z}$ and $\mathcal{C}$ is
state-control invariant: for any
$z \in \mathcal{C}$, the next state
$x^+ = f(x,u) \in \mathcal{X}_\infty$, so by control invariance of
$\mathcal{X}_\infty$ there exists $u^+ \in \mathcal{U}(x^+)$ with
$f(x^+, u^+) \in \mathcal{X}_\infty$, yielding
$\begin{bmatrix} (x^+)^\top & (u^+)^\top \end{bmatrix}^\top
\in \mathcal{C}$. By maximality of $\mathcal{Z}_\infty$,
$\mathcal{C} \subseteq \mathcal{Z}_\infty$, and hence
$x \in \Pi_x(\mathcal{Z}_\infty)$.
 
\emph{Proof of (ii).} Let $x \in \mathcal{X}_\infty$.
 
($\subseteq$) Let $u \in \mathcal{U}_{\mathcal{Z}_\infty}(x)$. Then
$\begin{bmatrix} x^\top & u^\top \end{bmatrix}^\top
\in \mathcal{Z}_\infty \subseteq \mathcal{Z}$, so
$u \in \mathcal{U}(x)$. Since $\mathcal{Z}_\infty$ is state-control
invariant, there exists $u^+ \in \mathbb{R}^{n_u}$ such that
$\begin{bmatrix} f(x,u)^\top & (u^+)^\top \end{bmatrix}^\top
\in \mathcal{Z}_\infty$. In particular,
$f(x,u) \in \Pi_x(\mathcal{Z}_\infty) = \mathcal{X}_\infty$ by (i).
Hence $u \in \mathcal{U}^f_\infty(x)$ by the definition of
$\mathcal{U}^f_\infty$, and so
$u \in \mathcal{U}^f_\infty(x) \cap\, \mathcal{U}(x)$.
 
($\supseteq$) Let
$u \in \mathcal{U}^f_\infty(x) \cap\, \mathcal{U}(x)$.
Then $u \in \mathcal{U}(x)$ and
$f(x,u) \in \mathcal{X}_\infty$. Since
$x \in \mathcal{X}_\infty$ and $u \in \mathcal{U}(x)$ and
$f(x,u) \in \mathcal{X}_\infty$, we have
$\begin{bmatrix} x^\top & u^\top \end{bmatrix}^\top \in \mathcal{C}
\subseteq \mathcal{Z}_\infty$, where $\mathcal{C}$ is as constructed
in the proof of (i). Hence
$u \in \mathcal{U}_{\mathcal{Z}_\infty}(x)$. \frQED
\end{proof}

%%%%%%%%%%%%%%%%%%%%%%%%%%%%%%%%%%%%%%%%%%%%%%%%%%%%%%%%%%%%%%%%%%%%%%%%%%%%%%%%
\section{Learning from Failures}
\label{sec:learning}
In this section, we develop an algorithm to learn the MSCI from failing trajectories for deterministic LTI systems with unknown parameters and polytopic constraints.

\subsection{Specialization to linear systems with polytopic constraints}
\label{subsec:lti-polytopic}

For the remainder of this paper, we consider the deterministic LTI
system
\begin{equation}\label{eq:lti}
    x(k+1) = Ax(k) + Bu(k), \quad x(0) = x_0, \quad\forall k\in\overline{\mathbb{Z}}_{+},
\end{equation}
where $A \in \mathbb{R}^{n_x \times n_x}$ and
$B \in \mathbb{R}^{n_x \times n_u}$. Consider the polytopic constraints
in the joint state-control space given in
$\mathcal{H}$-representation as
\begin{equation}
\label{eq:P-def}
\mathcal{P}
=
\{z\in\mathbb{R}^{n_x+n_u}:\ H_z\, z\le h_z\},
\end{equation}
where $H_z\in\mathbb{R}^{n_\mathrm{c}\times(n_x+n_u)}$ and
$h_z\in\mathbb{R}^{n_\mathrm{c}}$, with $n_\mathrm{c} \in \mathbb{Z}^+$ denoting the
number of constraints. We partition the constraint matrix $H_z$ as
\begin{equation*}
H_z
=
\begin{bmatrix}
H_{zx} & H_{zu}
\end{bmatrix},
\end{equation*}
where $H_{zx} \in \mathbb{R}^{n_\mathrm{c} \times n_x}$ and
$H_{zu} \in \mathbb{R}^{n_\mathrm{c} \times n_u}$, so that 
\[
(H_{zx})_j\, x + (H_{zu})_j\, u \le (h_z)_j, \qquad \forall j \in \{1, \dots, n_\mathrm{c}\}.
\]
For any $x \in \Pi_x(\mathcal{P})$, the $x$-section of $\mathcal{P}$ takes the form
\begin{equation}
\label{eq:Ux-from-partition}
\mathcal{U}_{\mathcal{P}}(x)
=
\big\{u\in\mathbb{R}^{n_u}:\
H_{zu}\, u \le h_z - H_{zx}\, x\big\},
\end{equation}
which is a polytope in $\mathbb{R}^{n_u}$ parameterized by $x$.

The predecessor of a polytope $\Omega \subset \mathbb{R}^{n_x+n_u}$ is given by $
\operatorname{Pre}_z(\Omega)
    = \big\{z \in \mathbb{R}^{n_x+n_u} : \exists u^+ \text{ s.t. }
    \begin{bmatrix} (Ax+Bu)^\top & (u^+)^\top \end{bmatrix}^\top
\in \Omega\big\}
$, which can alternatively be written as
\begin{equation}\label{eq:pre-z-lti}
    \operatorname{Pre}_z(\Omega)
    = \big\{z \in \mathbb{R}^{n_x+n_u} :
    \begin{bmatrix} A & B \end{bmatrix} z \in \Pi_x(\Omega)\big\}.
\end{equation}
Similarly to computing the MCI \cite{borrelli2017predictive}, we can apply the recursion \eqref{eq:mci-z-iteration} to obtain the MSCI in a finite number of iterations.

To show the connection between the constraints defined by $\Pi_x(\Omega)$ and $\operatorname{Pre}_z(\Omega)$, let the $\mathcal{H}$-representation of $\Pi_x(\Omega)$ be given by 
\begin{equation}
    \label{eq:proj-poly}
    \Pi_x(\Omega) = \{x \in \mathbb{R}^{n_x} :H_{\mathrm{proj}}\, x \leq g_{\mathrm{proj}}\}
\end{equation} 
where $H_{\mathrm{proj}} \in \mathbb{R}^{n_\mathrm{p}\times n_x}$ and $g_{\mathrm{proj}} \in \mathbb{R}^{n_\mathrm{p}}$, with $n_\mathrm{p} \in \mathbb{Z}^+$ denoting the number of halfspaces of the projected polytope. Note that $Ax+Bu \in \Pi_x(\Omega)$ if and only if
\begin{equation*}
    \label{eq:next-in-proj}
    H_{\mathrm{proj}}\, (Ax+Bu) \leq g_{\mathrm{proj}},
\end{equation*}
which, noting $Ax+Bu = \begin{bmatrix} A & B \end{bmatrix} z$ and using \eqref{eq:proj-poly}, yields the $\mathcal{H}$-representation of $\operatorname{Pre}_z(\Omega)$ as
\begin{equation}\label{eq:pre-z-lti-explicit}
    \operatorname{Pre}_z(\Omega)
    = \big\{z \in \mathbb{R}^{n_x+n_u} :
    H_{\mathrm{proj}}
    \begin{bmatrix} A & B \end{bmatrix} z
    \leq g_{\mathrm{proj}}\big\},
\end{equation}
which is a polytope in the state-control space with $n_\mathrm{p}$ constraints. Hence, there is a row-wise correspondence between the constraints defined by the projected polytope $\Pi_x(\Omega)$ and those of the predecessor $\operatorname{Pre}_z(\Omega)$.
\subsection{Iterative constraint refinement}
\label{subsec:iterative-refinement}
In this subsection, we develop an iterative learning algorithm for learning the MSCI $\mathcal{Z}_\infty$. Rather than computing $\mathcal{Z}_\infty$ via predecessor recursion, which requires knowledge of $(A,B)$, we iteratively refine a polytopic constraint set at each iteration using observed failures.

Let the constraint set at iteration $\ell \in \overline{\mathbb{Z}}^+$ be
\begin{equation*}
    \mathcal{P}_\ell
    \triangleq
    \{z \in \mathbb{R}^{n_x+n_u} :\
    H_z^{\,\ell}\, z \le h_z^{\,\ell}\},
\end{equation*}
with $H_z^{\,\ell} \in \mathbb{R}^{n_\mathrm{c}^\ell \times (n_x+n_u)}$,
$h_z^{\,\ell} \in \mathbb{R}^{n_\mathrm{c}^\ell}$, where $n_\mathrm{c}^\ell \in \mathbb{Z}^+$ is the
number of constraints at the $\ell$th iteration. Let the initial polytope $\mathcal{P}_0$ be $\left\{z:x\in\mathcal{X}, u\in\mathcal{U}(x)\right\}$. Using the state projection and $x$-section of $\mathcal{P}_\ell$, 
the state and input constraints at iteration $\ell$ are
\begin{equation*}
    \mathcal{X}_\ell \triangleq \Pi_x(\mathcal{P}_\ell),
    \qquad
    \mathcal{U}_\ell(x) \triangleq \mathcal{U}_{\mathcal{P}_\ell}(x), \qquad \forall x\in\mathcal{X},
\end{equation*}
so that $\mathcal{U}_\ell(x) = \{u \in \mathbb{R}^{n_u} :
H_{zu}^{\,\ell}\, u \le h_z^{\,\ell} - H_{zx}^{\,\ell}\, x\}$
for all $x \in \mathcal{X}_\ell$. The $x$-section $\mathcal{U}_\ell(x)$ is the algorithm's current
estimate of the admissible invariance-preserving inputs at state $x \in \mathcal{X}_\ell$. The $\mathcal{H}$-representation of $\mathcal{X}_{\ell}$ is
\begin{equation*}
    \mathcal{X}_{\ell} = \{x \in \mathbb{R}^{n_x} : H_{\mathrm{proj}}^{\,\ell}\, x \leq g_{\mathrm{proj}}^{\,\ell}\},
\end{equation*}
where $H_{\mathrm{proj}}^{\,\ell} \in \mathbb{R}^{n_\mathrm{p}^{\ell} \times n_x}$ and $g_{\mathrm{proj}}^{\,\ell} \in \mathbb{R}^{n_\mathrm{p}^{\ell}}$ where $n_\mathrm{p}^\ell \in \mathbb{Z}^+$ is the
number of constraints defined by $\mathcal{X}_\ell$ at the $\ell$th iteration.

At iteration $\ell \in \mathbb{Z}^+$, a state-feedback controller
$\pi^{\,\ell} : \mathcal{X}_{\ell-1} \to \mathbb{R}^{n_u}$
satisfying
$\pi^{\,\ell}(x) \in \mathcal{U}_{\ell-1}(x)$ for all
$x \in \mathcal{X}_{\ell-1}$ is applied to system \eqref{eq:lti},
generating a closed-loop trajectory defined by $\mathcal{T}^\ell \triangleq \{z^{\ell}(k)\}_{k\geq0}$ with
$z^{\ell}(k) = \begin{bmatrix} (x^{\ell}(k))^\top &
(u^{\ell}(k))^\top \end{bmatrix}^\top$ and
$u^{\ell}(k) = \pi^{\,\ell}(x^{\ell}(k))$ for all $k \geq 0$. We assume that at every iteration $\ell$ the controller $\pi^\ell$ generates a failing trajectory.

\begin{assumption}[Permissible failure]
\label{assume:permissible-failure}
At each iteration $\ell$, $\pi^\ell$ is permitted to generate a failing trajectory. Upon failure, the system is reset to an initial state $x(0) \in \mathcal{X}_{\ell}$.
\end{assumption}

At iteration $\ell$, a one-step failing state-input vector $z^\ell = \begin{bmatrix}
    x^{\ell} & u^{\ell}
\end{bmatrix}$, with $x^\ell \in \mathcal{X}_{\ell-1}$ and $u^{\ell}\in \mathcal{U}_{\ell-1}(x)$, satisfies
\begin{equation}
\label{eq:failure-condition}
    z^{\ell} \in \mathcal{P}_{\ell-1}
    \qquad \text{and} \qquad
    z^{\ell} \notin
    \operatorname{Pre}_z(\mathcal{P}_{\ell-1}),
\end{equation}
which implies that either $u^{\ell}$ is \emph{not} in the invariance-preserving input set $\mathcal{U}^f_\infty(x^{\ell})$ at $x^\ell$, or $x^\ell$ is \textit{not} in $\mathcal{X}_\infty$. Our goal is to exclude such states and inputs from $\mathcal{P}_\ell$, the polytope used in the next iteration.

Now, we collect the failure time index and the corresponding one-step failing state-control pair from the trajectory at iteration $\ell$ as
\begin{equation}
\label{eq:failure-data}
\begin{aligned}
    &\mathcal{V}^{\,\ell}
    \triangleq \big\{k : z^{\ell}(k) \in \mathcal{P}_{\ell-1},\;
    x^{\ell}(k{+}1) \notin \mathcal{X}_{\ell-1}\big\},
    \\
    & \mathcal{F}^{\,\ell}
    \triangleq \big\{z^{\ell}(k) : k \in \mathcal{V}^{\,\ell}\big\}.
\end{aligned}
\end{equation}

Since $\operatorname{Pre}_z(\mathcal{P}_{\ell-1})$ is a polytope, each $z \in \mathcal{F}^{\,\ell}$ violates at least one of the $n_\mathrm{p}^{\ell-1}$ constraints associated with
$\operatorname{Pre}_z(\mathcal{P}_{\ell-1})$. 

By \eqref{eq:pre-z-lti-explicit}, the $j$-th constraint of $\mathcal{X}_{\ell-1}$ induces the $j$-th constraint of $\operatorname{Pre}_z(\mathcal{P}_{\ell-1})$ given by
\begin{equation}
    \label{eq:violated_halfspace}
    (H_{\mathrm{proj}}^{\,\ell-1})_j \begin{bmatrix} A & B \end{bmatrix}
    z \leq (g_{\mathrm{proj}}^{\,\ell-1})_j.
\end{equation}
Define the vector $a_j^* \triangleq \begin{bmatrix} A & B \end{bmatrix}^\top (H_{\mathrm{proj}}^{\,\ell-1})^\top_j
\in \mathbb{R}^{n_x+n_u}$ as the $z$-space normal of the halfspace that we seek to
learn.

When a failure occurs at time $k$, the observed next state
$x^{\ell}(k{+}1)$ violates at least one constraint of
$\mathcal{X}_{\ell-1}$ which is associated with a halfspace in $\operatorname{Pre}_z(\mathcal{P}_{\ell-1})$. Examining which constraints defined by $\mathcal{X}_{\ell-1}$ were violated allows us to identify \emph{which} predecessor constraints were violated and their corresponding $(g_{\mathrm{proj}}^{\,\ell-1})_j$ without knowing the dynamics. 

Next, we learn $a_j^*$ without knowing the dynamics. The $z$-space normal $a_j^*$ satisfies
\begin{equation}
\label{eq:regression-relation}
\begin{aligned}
    a_j^{*\top} z(t)
    &= (H_{\mathrm{proj}}^{\,\ell-1})_j\, \begin{bmatrix} A & B \end{bmatrix} z(t) \\
    &= (H_{\mathrm{proj}}^{\,\ell-1})_j\, x(t{+}1),
\end{aligned}
\end{equation}
for every time step $t$ along any trajectory of \eqref{eq:lti}. The left-hand side involves the unknown $a_j^*$, while the right-hand side is computed using $(H_{\mathrm{proj}}^{\,\ell-1})_j$ and $x(t{+}1)$. Hence, learning $a_j^*$ is a linear regression with $z(t)$ being the regressor in $\mathbb{R}^{n_x + n_u}$, $a_j^*$ being the parameter to estimate, and $(H_{\mathrm{proj}}^{\,\ell-1})_j\, x(t{+}1)$ being a measurement. 

Let
$p \triangleq n_x + n_u$. To learn $a_j^*$, we require $p$ linearly independent state-control samples. We form a window of $p$ samples from available trajectory data
\begin{equation}
\label{eq:Z-window}
    Z =
    \begin{bmatrix}
    z(t_1)^\top \\
    z(t_2)^\top \\
    \vdots \\
    z(t_p)^\top
    \end{bmatrix}
    \in \mathbb{R}^{p \times p},
    \;
    s =
    \begin{bmatrix}
    (H_{\mathrm{proj}}^{\,\ell-1})_j\, x(t_1{+}1) \\
    (H_{\mathrm{proj}}^{\,\ell-1})_j\, x(t_2{+}1) \\
    \vdots \\
    (H_{\mathrm{proj}}^{\,\ell-1})_j\, x(t_p{+}1)
    \end{bmatrix}
    \in \mathbb{R}^{p},
\end{equation}
where $t_1, \ldots, t_p$ are time indices drawn from the failing trajectory or from other available trajectories.
Since \eqref{eq:lti} is linear and deterministic, $s = Z a_j^*$ holds. If
$\operatorname{rank}(Z) = p$, the estimate of the normal vector $\widehat{a}_j$ is given by
\begin{equation}
\label{eq:normal-lstsq}
    \widehat{a}_j= a_j^* = Z^{-1} s.
\end{equation}

Note that the rank condition $\operatorname{rank}(Z) = p$ is a \textit{persistence of excitation} condition on the state-control data. 

\begin{remark}[Learning the dynamics]\label{rem:compare-model-learning}
    The collected data $Z$ could be used to learn the dynamics $(A,B)$. However, this involves learning $n_x \times (n_x+n_u)$ parameters as opposed to $n_x + n_u$ parameters per halfspace in our method. Furthermore, recursion~\eqref{eq:mci-z-iteration} would still need to be run after $(A,B)$ are learned to compute $\mathcal{Z}_\infty$.
\end{remark}

We now prove that our learning algorithm does not eliminate any state-control pairs lying in $\mathcal{Z}_\infty$.
\begin{lemma}[Exact recovery of predecessor constraint]
\label{lemma:exact-recovery}
Given an iteration index $\ell \in \mathbb{Z}^+$, consider system~\eqref{eq:lti} with state and input constraints $\mathcal{X}_{\ell-1}$ and $\mathcal{U}_{\ell-1}(x)$ for all $x \in \mathbb{R}^{n_x}$. Suppose the $j$th constraint of $\mathcal{X}_{\ell-1} = \Pi_x(\mathcal{P}_{\ell-1})$ is violated by $\mathcal{T}^\ell$, and let $a_j^* = \begin{bmatrix} A & B \end{bmatrix}^\top (H_{\mathrm{proj}}^{\,\ell-1})_j$. If $\operatorname{rank}(Z) = p$, then the halfspace $\{z \in \mathbb{R}^{n_x+n_u} : (Z^{-1}s)^\top z \leq (g_{\mathrm{proj}}^{\,\ell-1})_j\}$ is contained in $\operatorname{Pre}_z(\mathcal{P}_{\ell-1})$, and every $z \in \mathcal{Z}_\infty$ satisfies $(Z^{-1}s)^\top z \leq (g_{\mathrm{proj}}^{\,\ell-1})_j$.
\end{lemma}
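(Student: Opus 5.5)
The plan has two parts: first show the learned normal is exactly the true one, then reduce both claims to the $\mathcal{H}$-representation \eqref{eq:pre-z-lti-explicit}.

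\textbf{Exact identification.} By \eqref{eq:regression-relation}, every sample in the window \eqref{eq:Z-window} satisfies $z(t_i)^\top a_j^* = (H_{\mathrm{proj}}^{\,\ell-1})_j\, x(t_i{+}1)$. This holds because \eqref{eq:lti} is deterministic and linear, so $x(t_i{+}1) = \begin{bmatrix} A & B\end{bmatrix} z(t_i)$ exactly. Stacking the $p$ samples gives $Z a_j^* = s$. Since $\operatorname{rank}(Z)=p$, $Z$ is invertible, so $Z^{-1}s = a_j^*$. The learned halfspace is therefore
\[
\mathcal{H}_j \triangleq \{z : (H_{\mathrm{proj}}^{\,\ell-1})_j \begin{bmatrix} A & B\end{bmatrix} z \le (g_{\mathrm{proj}}^{\,\ell-1})_j\},
\]
which is precisely the $j$th row constraint \eqref{eq:violated_halfspace} of $\operatorname{Pre}_z(\mathcal{P}_{\ell-1})$.

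\textbf{The containment claim.} By \eqref{eq:pre-z-lti-explicit}, $\operatorname{Pre}_z(\mathcal{P}_{\ell-1})$ is the intersection of the $n_\mathrm{p}^{\ell-1}$ halfspaces of this form, one of which is $\mathcal{H}_j$. The inclusion that actually follows is therefore $\operatorname{Pre}_z(\mathcal{P}_{\ell-1}) \subseteq \mathcal{H}_j$. The literal wording, $\mathcal{H}_j \subseteq \operatorname{Pre}_z(\mathcal{P}_{\ell-1})$, fails whenever the other rows are active. I would prove the former and read the statement in that sense: the learned halfspace is a supporting (not over-conservative) constraint of the predecessor. That is what is needed for the conclusion that no pair of $\mathcal{Z}_\infty$ is eliminated.

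\textbf{Points of $\mathcal{Z}_\infty$.} Let $z=\begin{bmatrix} x^\top & u^\top\end{bmatrix}^\top\in\mathcal{Z}_\infty$. By state-control invariance (Lemma~\ref{lemma:z-convergence}) there is $u^+$ with $\begin{bmatrix} (Ax+Bu)^\top & (u^+)^\top\end{bmatrix}^\top\in\mathcal{Z}_\infty$. Hence $Ax+Bu\in\Pi_x(\mathcal{Z}_\infty)=\mathcal{X}_\infty$, using Lemma~\ref{lemma:z-recovery}(i).

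\textbf{The main obstacle.} The step I expect to be hardest is the inclusion $\mathcal{Z}_\infty\subseteq\mathcal{P}_{\ell-1}$, which the lemma needs but does not state. I would establish it by induction on $\ell$:
\begin{itemize}
\item Base case: $\mathcal{Z}_\infty\subseteq\mathcal{Z}=\mathcal{P}_0$.
\item Inductive step: if $\mathcal{Z}_\infty\subseteq\mathcal{P}_{\ell-1}$, then $\mathcal{P}_\ell$ is $\mathcal{P}_{\ell-1}$ intersected with learned halfspaces of the above type. The present argument, applied at level $\ell-1$, shows each such halfspace contains $\mathcal{Z}_\infty$, so $\mathcal{Z}_\infty\subseteq\mathcal{P}_\ell$.
\end{itemize}

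Given the inductive hypothesis, projection monotonicity gives $\mathcal{X}_\infty=\Pi_x(\mathcal{Z}_\infty)\subseteq\Pi_x(\mathcal{P}_{\ell-1})=\mathcal{X}_{\ell-1}$. Therefore $Ax+Bu\in\mathcal{X}_{\ell-1}$, and in particular its $j$th row gives
\[
(H_{\mathrm{proj}}^{\,\ell-1})_j(Ax+Bu)\le(g_{\mathrm{proj}}^{\,\ell-1})_j.
\]
By the identity $a_j^{*\top}z=(H_{\mathrm{proj}}^{\,\ell-1})_j(Ax+Bu)$ and $Z^{-1}s=a_j^*$, this is exactly $(Z^{-1}s)^\top z\le(g_{\mathrm{proj}}^{\,\ell-1})_j$, completing the proof.
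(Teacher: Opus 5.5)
Your proof is correct and follows the paper's argument: $s = Z a_j^*$ plus invertibility of $Z$ gives exact identification, the row-wise correspondence \eqref{eq:pre-z-lti-explicit} identifies the learned halfspace, and state-control invariance of $\mathcal{Z}_\infty$ together with $\mathcal{X}_\infty \subseteq \mathcal{X}_{\ell-1}$ gives the second claim. Both of your deviations improve on the paper's proof. First, the paper's proof also only shows that the learned halfspace is one of the defining constraints of $\operatorname{Pre}_z(\mathcal{P}_{\ell-1})$, so the valid inclusion is indeed $\operatorname{Pre}_z(\mathcal{P}_{\ell-1})$ contained in the halfspace, as you say. Second, the paper justifies $\mathcal{Z}_\infty \subseteq \mathcal{P}_{\ell-1}$ only by $\mathcal{Z}_\infty \subseteq \mathcal{P}_0$ plus monotonicity of $\{\mathcal{P}_\ell\}$, which does not imply it; your induction, which is the one used in Theorem~\ref{theorem:monotone-convergence}(i), is what is actually needed.
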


\begin{proof}
Since the dynamics \eqref{eq:lti} are linear,
$(H_{\mathrm{proj}}^{\,\ell-1})_j\, x(t{+}1) = (H_{\mathrm{proj}}^{\,\ell-1})_j\, (Ax(t) +
Bu(t)) = a_j^{*\top} z(t)$ for every $t$. Therefore, $s = Z a_j^*$ holds. If $\operatorname{rank}(Z) = p$, then $Z$ is invertible
and $\widehat{a}_j = Z^{-1}s = a_j^*$ by \eqref{eq:normal-lstsq}. By \eqref{eq:pre-z-lti-explicit}, the halfspace
$a_j^{*\top} z \leq (g_{\mathrm{proj}}^{\,\ell-1})_j$ is one of the $n_\mathrm{p}^{\ell-1}$ constraints defining
$\operatorname{Pre}_z(\mathcal{P}_{\ell-1})$, establishing the first
claim.

For the second claim, note that $\mathcal{Z}_\infty$ is
state-control invariant and
$\mathcal{Z}_\infty \subseteq \mathcal{P}_{\ell-1}$ since
$\mathcal{Z}_\infty \subseteq \mathcal{P}_0$ and the sequence
$\{\mathcal{P}_\ell\}$ is non-increasing. For any
$z = \begin{bmatrix} x^\top & u^\top \end{bmatrix}^\top \in
\mathcal{Z}_\infty$, the next state satisfies $Ax + Bu \in \Pi_x(\mathcal{Z}_\infty) \subseteq \mathcal{X}_{\ell-1}$ by the state-control invariance of $\mathcal{Z}_\infty$ and the nesting $\mathcal{X}_\infty \subseteq \mathcal{X}_{\ell-1}$. Hence
$(H_{\mathrm{proj}}^{\,\ell-1})_j\, (Ax + Bu) \leq (g_{\mathrm{proj}}^{\,\ell-1})_j$, i.e.,
$a_j^{*\top} z \leq (g_{\mathrm{proj}}^{\,\ell-1})_j$. No state-control pair in
$\mathcal{Z}_\infty$ is excluded. \frQED
\end{proof}

Next, we prove that the learned halfspace $\widehat{a}_j^\top z \leq \left(g^{\ell -1}_{proj}\right)_j$ excludes the one-step failing state-input pair so that the same failure is not repeated in the next iteration.
\begin{lemma}[Failure exclusion]
\label{lemma:failure-exclusion}
Given an iteration index $\ell \in \mathbb{Z}^+$, consider system~\eqref{eq:lti} with state and input constraints $\mathcal{X}_{\ell-1}$ and $\mathcal{U}_{\ell-1}(x)$ for all $x \in \mathbb{R}^{n_x}$. Let $k$ be a failure time index, and $(H_{\mathrm{proj}}^{\,\ell-1})_j\, x \leq (g_{\mathrm{proj}}^{\,\ell-1})_j$ be the
corresponding violated constraint of $\mathcal{X}_{\ell-1}$, and let
$\widehat{a}_j = a_j^*$ be learned as in
Lemma~\ref{lemma:exact-recovery}. Then
$\widehat{a}_j^\top z^{\ell}(k) > (g_{\mathrm{proj}}^{\,\ell-1})_j$, and specifically $z^{\ell}(k) \notin \mathcal{P}_\ell$ after the
halfspace $\widehat{a}_j^\top z \leq (g_{\mathrm{proj}}^{\,\ell-1})_j$ is added
to $\mathcal{P}_{\ell-1}$.
\end{lemma}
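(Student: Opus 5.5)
The plan is to combine the exact identification from Lemma~\ref{lemma:exact-recovery} with the definition of a failure index in \eqref{eq:failure-data}. Since $k \in \mathcal{V}^{\,\ell}$, the failing vector satisfies $z^{\ell}(k) \in \mathcal{P}_{\ell-1}$ and $x^{\ell}(k{+}1) \notin \mathcal{X}_{\ell-1}$. By hypothesis, the $j$th row of the $\mathcal{H}$-representation of $\mathcal{X}_{\ell-1}$ is the one violated. This gives the strict inequality
\[
(H_{\mathrm{proj}}^{\,\ell-1})_j\, x^{\ell}(k{+}1) > (g_{\mathrm{proj}}^{\,\ell-1})_j .
\]

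Next, I will use the linear, deterministic dynamics \eqref{eq:lti}. They give $x^{\ell}(k{+}1) = \begin{bmatrix} A & B\end{bmatrix} z^{\ell}(k)$, so by \eqref{eq:regression-relation}
\[
a_j^{*\top} z^{\ell}(k) = (H_{\mathrm{proj}}^{\,\ell-1})_j \begin{bmatrix} A & B\end{bmatrix} z^{\ell}(k) = (H_{\mathrm{proj}}^{\,\ell-1})_j\, x^{\ell}(k{+}1).
\]
Lemma~\ref{lemma:exact-recovery} gives $\widehat{a}_j = Z^{-1}s = a_j^*$ under the rank condition. Substituting this into the strict inequality above yields $\widehat{a}_j^\top z^{\ell}(k) > (g_{\mathrm{proj}}^{\,\ell-1})_j$, which is the first claim.

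For the second claim, I take the updated set to be $\mathcal{P}_\ell = \mathcal{P}_{\ell-1} \cap \{z : \widehat{a}_j^\top z \le (g_{\mathrm{proj}}^{\,\ell-1})_j\}$, possibly intersected with further learned halfspaces. In either case $\mathcal{P}_\ell$ is contained in the new halfspace. The vector $z^{\ell}(k)$ violates that halfspace strictly, so $z^{\ell}(k) \notin \mathcal{P}_\ell$.

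There is no real obstacle here. The only point needing care is that the identity $a_j^{*\top} z = (H_{\mathrm{proj}}^{\,\ell-1})_j x^+$ must be applied at the failing sample itself, not only at the window samples $t_1,\dots,t_p$. This holds because the identity is valid along every trajectory of \eqref{eq:lti}, not merely on the data used to form $Z$.
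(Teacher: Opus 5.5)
Your proof is correct and takes essentially the same approach as the paper's. You evaluate $\widehat{a}_j^\top z^{\ell}(k) = (H_{\mathrm{proj}}^{\,\ell-1})_j\, x^{\ell}(k{+}1)$ via \eqref{eq:regression-relation}, invoke the strict violation at the failure index, and conclude from $\mathcal{P}_\ell$ being contained in the added halfspace; your remark that the identity must hold at the failing sample itself, and not only on the window samples, makes explicit a point the paper leaves implicit.
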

\proof
By \eqref{eq:regression-relation},
$\widehat{a}_j^\top z^{\ell}(k) = a_j^{*\top} z^{\ell}(k) =
(H_{\mathrm{proj}}^{\,\ell-1})_j\, x^{\ell}(k{+}1)$. Since $k$ is a failure
index, then $(H_{\mathrm{proj}}^{\,\ell-1})_j\, x^{\ell}(k{+}1) > (g_{\mathrm{proj}}^{\,\ell-1})_j$, and
hence $\widehat{a}_j^\top z^{\ell}(k) > (g_{\mathrm{proj}}^{\,\ell-1})_j$. Since
$\mathcal{P}_\ell \subseteq \{z : \widehat{a}_j^\top z \leq
(g_{\mathrm{proj}}^{\,\ell-1})_j\}$ by construction, we have
$z^{\ell}(k) \notin \mathcal{P}_\ell$. \frQED
\endproof
 
At each iteration $\ell$, for each failure point in $\mathcal{F}^{\ell}$, a halfspace $(\widehat{a}_j,\, (g_{\mathrm{proj}}^{\,\ell})_j)$ is learned if $\operatorname{rank}(Z) = p$. The polytope $\mathcal{P}_{\ell}$ is updated by intersecting with the learned halfspace to generate the next iteration's polytope
\begin{equation}
\label{eq:P-update}
    \mathcal{P}_{\ell+1}
    =
    \mathcal{P}_{\ell}
    \;\cap\;
    \{z \in \mathbb{R}^{n_x+n_u} :
    \widehat{a}_j^\top z \leq (g_{\mathrm{proj}}^{\,\ell})_j\}.
\end{equation}
Failure is checked using the updated polytope until no new failures are identified. The algorithm terminates when a maximum number of iterations $L$ is reached.

Now we analyze the convergence of the algorithm to $\mathcal{Z}_\infty$.
\begin{algorithm}[!b]
\caption{FAIL: Failure-Aware Iterative Learning}
\label{alg:fail-learn}
\begin{algorithmic}[1]
\Require Initial polytope $\mathcal{P}_0$, maximum number of iterations $L$, maximum trajectory length $T$.
\State $\ell \gets 1$
\While{$\ell \leq L$}
    \State Apply controller $\pi^\ell$ with
    $\pi^\ell(x) \in \mathcal{U}_{\ell-1}(x)$
    \State Compute $\mathcal{X}_{\ell-1} =
    \Pi_x(\mathcal{P}_{\ell-1})$
    \State Collect $\{z(k)\}_{k=0}^{T}$, terminate when $x(k)\notin \mathcal{X}_{\ell-1}$
    \While{\textbf{true}}
        \State $\ell' \gets \ell$
        \State Collect $\mathcal{F}^{\ell'}$ from $\{z(k)\}$ using \eqref{eq:failure-data} using $\mathcal{X}_{\ell-1}$
        \If{$\mathcal{F}^{\ell'}$ is $\emptyset$}
            \State \textbf{break} \Comment{no new failures from this trajectory}
        \EndIf
        \For{each $z(k) \in \mathcal{F}^{\ell'}$}
            \State Identify failure $(H_{\mathrm{proj}}^{\,\ell-1})_j\, x > (g_{\mathrm{proj}}^{\,\ell-1})_j$
            \State Form $Z$, $s$ from \eqref{eq:Z-window} using $p$ samples
            \If{$\operatorname{rank}(Z) \neq p$}
                \State Append data points until $\operatorname{rank}(Z) = p$
            \EndIf
            \State $\widehat{a}_j \gets Z^{-1} s$
            \State $\mathcal{P}_\ell \gets \mathcal{P}_{\ell-1} \cap
            \{z : \widehat{a}_j^\top z \leq (g_{\mathrm{proj}}^{\,\ell-1})_j\}$
            \State $\mathcal{X}_{\ell} \gets \Pi_x(\mathcal{P}_{\ell})$
            \State $\ell \gets \ell + 1$
        \EndFor
    \EndWhile
\EndWhile
\end{algorithmic}
\end{algorithm}

\begin{theorem}[Monotone convergence]
\label{theorem:monotone-convergence}
Let $\{\mathcal{P}_\ell\}_{\ell \geq 0}$ be the sequence of
polytopes generated by \eqref{eq:P-update}. Then, the following statements hold:
\begin{enumerate}[(i)]
    \item $\mathcal{Z}_\infty \subseteq \mathcal{P}_\ell \subseteq
    \mathcal{P}_{\ell-1}$ for all $\ell \in \mathbb{Z}^+$.
    \item $\mathcal{U}^f_\infty(x) \cap \mathcal{U}(x) \subseteq
    \mathcal{U}_\ell(x) \subseteq \mathcal{U}_{\ell-1}(x)$ for all
    $x \in \mathcal{X}_\ell$ and all $\ell \in \mathbb{Z}^+$.
    \item Suppose that for every $\ell \in \overline{\mathbb{Z}}_{+}$
    with $\mathcal{P}_\ell \neq \mathcal{Z}_\infty$, the conditions
    of Lemma~\ref{lemma:exact-recovery} hold and the learned
    halfspace is not already a constraint of $\mathcal{P}_\ell$.
    Then $\mathcal{P}_\ell \to \mathcal{Z}_\infty$ in a finite
    number of iterations.
\end{enumerate}
\end{theorem}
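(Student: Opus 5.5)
The plan is to prove the three items in order, with (i) driving the other two.

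For (i), I will argue by induction on $\ell$. The right inclusion $\mathcal{P}_\ell \subseteq \mathcal{P}_{\ell-1}$ is immediate from the update \eqref{eq:P-update}, since $\mathcal{P}_\ell$ is $\mathcal{P}_{\ell-1}$ intersected with a halfspace. For the left inclusion, the base case is $\mathcal{Z}_\infty \subseteq \mathcal{Z} = \mathcal{P}_0$, which holds by Definition~\ref{def:msci}. Suppose $\mathcal{Z}_\infty \subseteq \mathcal{P}_{\ell-1}$. Under $\operatorname{rank}(Z)=p$, the second claim of Lemma~\ref{lemma:exact-recovery} shows that every $z\in\mathcal{Z}_\infty$ satisfies $\widehat{a}_j^\top z \le (g_{\mathrm{proj}}^{\,\ell-1})_j$. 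That lemma's proof uses exactly the nesting $\mathcal{Z}_\infty\subseteq\mathcal{P}_{\ell-1}$, which the induction hypothesis supplies; I will make this explicit to avoid circularity. Hence $\mathcal{Z}_\infty$ lies in the intersection, so $\mathcal{Z}_\infty \subseteq \mathcal{P}_\ell$. If no halfspace is learned at a step, then $\mathcal{P}_\ell = \mathcal{P}_{\ell-1}$ and the claim is trivial.

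For (ii), I will pass (i) through $x$-sections. Sections preserve inclusion: if $Z_1\subseteq Z_2$, then $\mathcal{U}_{Z_1}(x)\subseteq\mathcal{U}_{Z_2}(x)$ for every $x$. This gives $\mathcal{U}_\ell(x)\subseteq\mathcal{U}_{\ell-1}(x)$ directly from $\mathcal{P}_\ell\subseteq\mathcal{P}_{\ell-1}$. For the lower bound, take $x\in\mathcal{X}_\infty$. Lemma~\ref{lemma:z-recovery}(ii) gives $\mathcal{U}^f_\infty(x)\cap\mathcal{U}(x) = \mathcal{U}_{\mathcal{Z}_\infty}(x) \subseteq \mathcal{U}_{\mathcal{P}_\ell}(x) = \mathcal{U}_\ell(x)$. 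For $x\in\mathcal{X}_\ell\setminus\mathcal{X}_\infty$, I claim the left-hand side is empty. If $u\in\mathcal{U}(x)$ with $f(x,u)\in\mathcal{X}_\infty$, then the set $\mathcal{X}_\infty\cup\{x\}$ would be control invariant (with $x\in\mathcal{X}$, because $\mathcal{X}_\ell\subseteq\mathcal{X}$), contradicting maximality. So the inclusion holds trivially there, and I will note this case explicitly.

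For (iii), the key observation is that every learned halfspace is one of finitely many candidate constraints. I will first argue that for LTI dynamics with polytopic constraints the recursion \eqref{eq:mci-z-iteration} terminates finitely. The halfspaces available at any stage are of the form $(H_{\mathrm{proj}})_j[A\ B]z\le g_j$, with $(H_{\mathrm{proj}},g)$ facets of projections of polytopes built from such halfspaces. I will then bound the candidates by tying each learned constraint to a constraint of some predecessor iterate $\Omega_k$ of the exact recursion. The argument is by induction: by (i), $\mathcal{P}_\ell \supseteq \mathcal{Z}_\infty$, and by Lemma~\ref{lemma:exact-recovery} each added halfspace contains $\operatorname{Pre}_z(\mathcal{P}_{\ell-1})$. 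The hypothesis that each learned halfspace is new gives strict decrease in the constraint set, so there are only finitely many additions. Once no further failure can occur, every $z\in\mathcal{P}_\ell$ has $[A\ B]z\in\Pi_x(\mathcal{P}_\ell)$. Then $\mathcal{P}_\ell\subseteq\operatorname{Pre}_z(\mathcal{P}_\ell)$, so $\mathcal{P}_\ell$ is state-control invariant, and maximality together with (i) forces $\mathcal{P}_\ell=\mathcal{Z}_\infty$.

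The main obstacle is the finiteness count. The facets of $\Pi_x(\mathcal{P}_\ell)$ can change as constraints are added, so I must show that the set of all possible learned halfspaces is finite. My first attempt will be to show inductively that $\mathcal{P}_\ell$ always contains the exact iterate $\Omega_k$ for the appropriate $k$ and is cut only by constraints appearing in $\Omega_{k+1}$. Finite termination of the exact recursion then bounds the total number of distinct halfspaces.
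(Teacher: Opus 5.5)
\textbf{Parts (i) and (ii).} These follow the paper's proof, and your version is tighter in two places. First, you make explicit that Lemma~\ref{lemma:exact-recovery} needs $\mathcal{Z}_\infty\subseteq\mathcal{P}_{\ell-1}$ as the induction hypothesis. The lemma's own justification (``$\mathcal{Z}_\infty\subseteq\mathcal{P}_0$ and the sequence is non-increasing'') does not yield this. Second, you handle $x\in\mathcal{X}_\ell\setminus\mathcal{X}_\infty$, where the paper applies Lemma~\ref{lemma:z-recovery}(ii) outside its stated domain. Your $\mathcal{X}_\infty\cup\{x\}$ maximality argument for that case is correct.

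\textbf{The gap in (iii).} The gap is exactly where you flag it: bounding the set of halfspaces FAIL can add. Your planned route does not work, for two reasons.

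First, you cannot prove that \eqref{eq:mci-z-iteration} terminates finitely for LTI systems with polytopic constraints, because this is false in general. Take $x^+=2x+u$ with $|x|\le 1$ and $|u|\le\tfrac12$. Then $\Pi_x(\Omega_k)=[-c_k,c_k]$ with $c_{k+1}=(c_k+\tfrac12)/2$, so $c_k=\tfrac12+2^{-(k+1)}>\tfrac12$ for every $k$, while $\mathcal{X}_\infty=[-\tfrac12,\tfrac12]$. FAIL behaves the same way. Each failure yields a new cut $2x+u\le c$ (or its mirror image), the hypotheses of (iii) can be met at every iteration, and yet $\mathcal{P}_\ell\neq\mathcal{Z}_\infty$ for all $\ell$. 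So finite determinedness of $\mathcal{Z}_\infty$ has to enter as an explicit hypothesis, not as something you derive. The paper imports it only through an over-general citation.

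Second, even if you assume $\Omega_K=\mathcal{Z}_\infty$, your inductive claim that $\mathcal{P}_\ell$ is cut only by constraints of some $\Omega_{k+1}$ is not available. A learned cut is the pullback through $[A\;B]$ of a facet of $\Pi_x(\mathcal{P}_\ell)$. But $\mathcal{P}_\ell$ carries only some of the predecessor cuts, and nothing prevents eliminating $u$ from that subset from producing projection facets that are absent from, or redundant in, every $\Pi_x(\Omega_k)$. The pullbacks of such facets contain $\mathcal{Z}_\infty$ but need not be constraints of any $\Omega_k$.

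What does follow by induction is $\Omega_\ell\subseteq\mathcal{P}_\ell$. This uses monotonicity of $\operatorname{Pre}_z$ and the fact that each cut contains $\operatorname{Pre}_z(\mathcal{P}_{\ell-1})$. It shows FAIL never runs ahead of the exact recursion, not that it catches up. Hence ``each learned halfspace is new'' bounds the number of iterations only once you have shown, or assumed, that the pool of halfspaces FAIL can generate is finite. That is the missing ingredient.

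The paper's proof has the same hole. It treats the outstanding predecessor constraints as a finite pool that each cut depletes. In the scalar example, however, every cut creates a new projection facet and hence a new outstanding constraint.

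\textbf{What to keep.} Your closing step is correct and worth keeping: $\mathcal{P}_\ell\subseteq\operatorname{Pre}_z(\mathcal{P}_\ell)$ makes $\mathcal{P}_\ell$ state-control invariant, hence equal to $\mathcal{Z}_\infty$ by maximality and (i). It identifies where the process stops, but it does not supply the finiteness.
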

\proof
\emph{(i)} The right inclusion
$\mathcal{P}_\ell \subseteq \mathcal{P}_{\ell-1}$ holds by
construction, since $\mathcal{P}_\ell$ is obtained by intersecting
$\mathcal{P}_{\ell-1}$ with additional halfspaces. The inclusion $\mathcal{Z}_\infty \subseteq \mathcal{P}_0$ holds since $\mathcal{Z}_\infty \subseteq \mathcal{Z} = \mathcal{P}_0$. Suppose $\mathcal{Z}_\infty \subseteq \mathcal{P}_{\ell-1}$. By Lemma~\ref{lemma:exact-recovery}, every $z \in \mathcal{Z}_\infty$ satisfies each learned halfspace $\widehat{a}^\top z \leq g$. Hence $\mathcal{Z}_\infty \subseteq \mathcal{P}_\ell$ holds by induction.

\emph{(ii)} For any $x \in \mathcal{X}_\ell$, the inclusion $\mathcal{U}_\ell(x) \subseteq \mathcal{U}_{\ell-1}(x)$ follows from $\mathcal{P}_\ell \subseteq \mathcal{P}_{\ell-1}$ and the definition of the $x$-section. For any $x \in \mathcal{X}_\ell$ and $u \in \mathcal{U}^f_\infty(x) \cap \mathcal{U}(x) =
\mathcal{U}_{\mathcal{Z}_\infty}(x)$, we have $\begin{bmatrix} x^\top & u^\top \end{bmatrix}^\top \in \mathcal{Z}_\infty \subseteq \mathcal{P}_\ell$, so
$u \in \mathcal{U}_{\mathcal{P}_\ell}(x) = \mathcal{U}_\ell(x)$, then $\mathcal{U}^f_\infty(x) \cap \mathcal{U}(x) \subseteq
    \mathcal{U}_\ell(x) \subseteq \mathcal{U}_{\ell-1}(x)$ holds by induction.

\emph{(iii)} By (i), the sequence
$\{\mathcal{P}_\ell\}$ is non-increasing and bounded below by
$\mathcal{Z}_\infty$. Since both $\mathcal{P}_\ell$ and $\mathcal{Z}_\infty$ are polytopes, and $\mathcal{Z}_\infty$ is obtained from $\mathcal{P}_0$ by adjoining a finite number of predecessor constraints, the total number of constraints defined by $\operatorname{Pre}_z(\mathcal{P}_\ell)$ not yet in $\mathcal{P}_\ell$ is finite and non-increasing across iterations. By assumption, at least one such constraint is learned at each iteration for which $\mathcal{P}_\ell \neq \mathcal{Z}_\infty$. Hence the process terminates with $\mathcal{P}_\ell = \mathcal{Z}_\infty$ in a finite number of iterations.\frQED
\endproof

The Failure Aware Iterative Learning (FAIL) algorithm is summarized in Algorithm~\ref{alg:fail-learn}.

\begin{figure}[!tb]
    \centering
    \includegraphics[width=\linewidth]{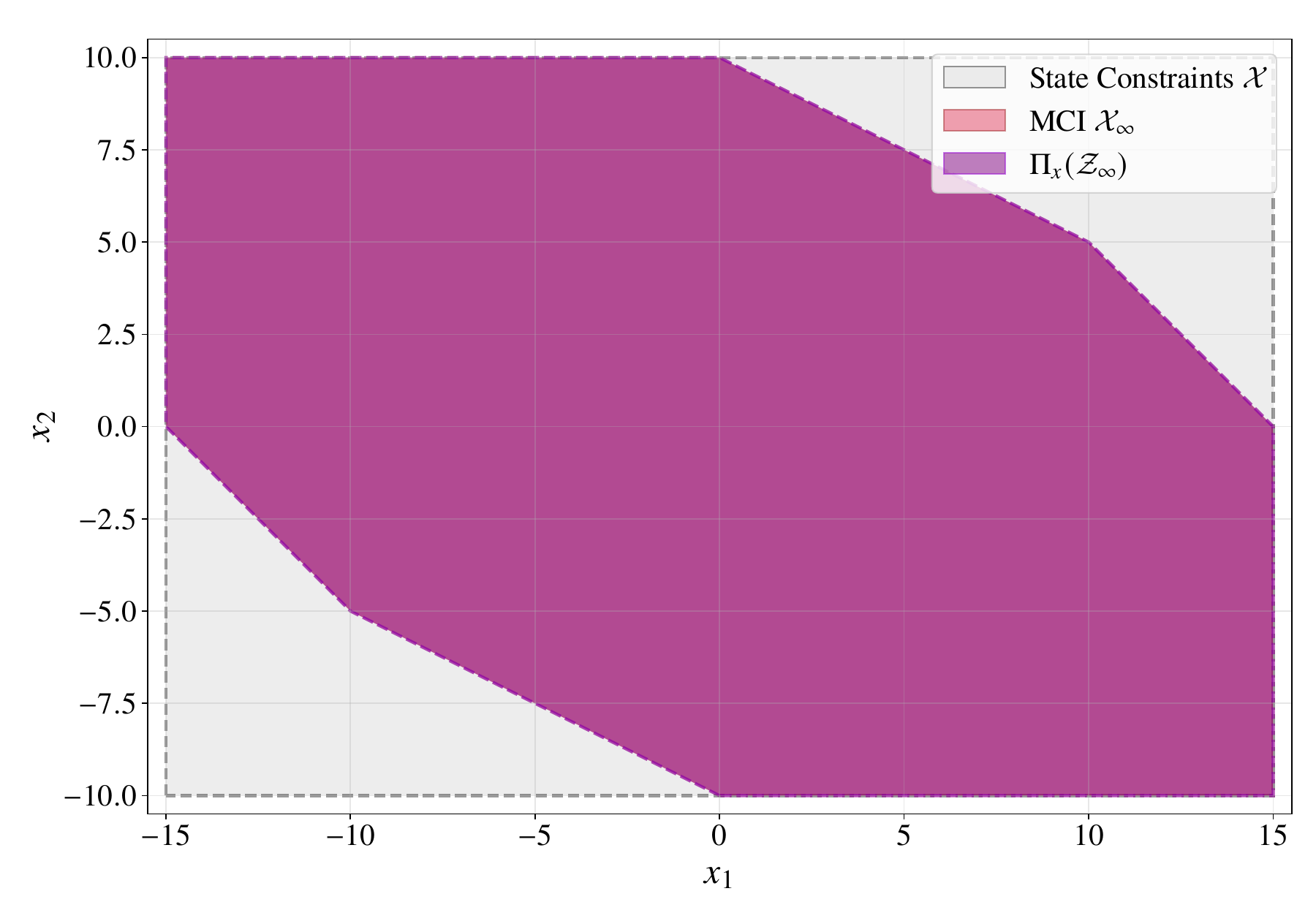}
    \caption{Projected maximal state-control invariant set matches the maximal control invariant set.}
    \label{img:mci-z-projected}
\end{figure}

\begin{figure}[!tb]
    \centering  \includegraphics[width=\linewidth]{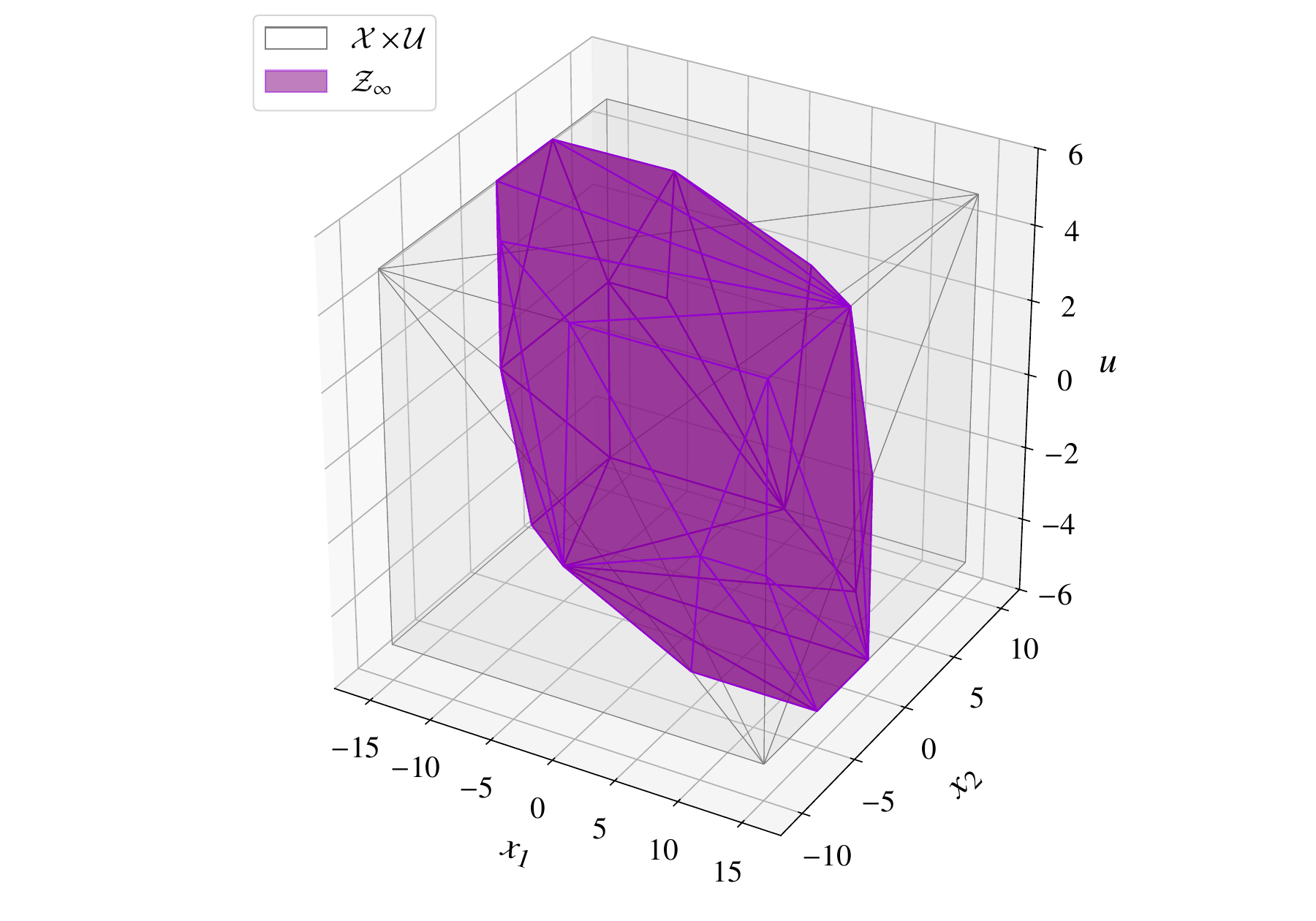}
    \caption{Maximal state-control invariant set.}
    \label{img:mci-z}
\end{figure}

\begin{remark}[Controller design]\label{rem:controller-design}
The design of the controller $\pi^{\,\ell}$ is beyond the scope of this paper. We only require that the controller $\pi^\ell$ produces a failing trajectory and generates enough samples so that the condition $\operatorname{rank}(Z)=p$ holds. 
\end{remark}

\section{Numerical Results}
\label{sec:results}

To validate the proposed framework, we conduct numerical experiments. Consider the discrete-time LTI system
\begin{equation*}
    x(k+1) = \begin{bmatrix} 1 & 1 \\ 0 & 1 \end{bmatrix}x(k) + \begin{bmatrix} 0 \\ 1\end{bmatrix} u(k), \,\, x(0) = x_0, \,\,\forall k\in\overline{\mathbb{Z}}_{+},
\end{equation*}
which can be cast in the form of \eqref{eq:lti} with $A = \begin{bmatrix} 1 & 1 \\ 0 & 1 \end{bmatrix}$, $B=\begin{bmatrix} 0 \\ 1\end{bmatrix}$, $x = [x_1,x_2]$, $n_x = 2$ and $n_u = 1$, so that $p = n_x + n_u = 3$. The state and input constraints are
\begin{equation*}
    |x_1| \leq 15, \quad |x_2| \leq 10, \quad |u| \leq 5,
\end{equation*}
yielding an initial polytope $\mathcal{P}_0 \subset
\mathbb{R}^3$ with $n_\mathrm{c} = 6$ constraints. The matrices $(A,B)$ are
assumed unknown to Algorithm~\ref{alg:fail-learn} and are used only to simulate the
closed-loop system and to compute $\mathcal{Z}_\infty$ for
comparison. All experiments were conducted on a laptop with an Intel i9-11980HK CPU and an NVIDIA RTX 3080 GPU.
 
\subsection{Computing $\mathcal{Z}_\infty$ using predecessor recursion}

Using the dynamics $(A,B)$, we compute the MSCI $\mathcal{Z}_\infty$ via the recursion
\eqref{eq:mci-z-iteration}. The iteration is initialized with
$\Omega_0 = \mathcal{P}_0$. For validation, we compare the MSCI computation with the MCI computation described in~\cite{borrelli2017predictive}. 

The computed $\mathcal{Z}_\infty \subset \mathbb{R}^3$, shown in Fig.~\ref{img:mci-z}, has $14$
halfspace constraints: the initial $6$ halfspace constraints and $8$ halfspaces from the recursion \eqref{eq:mci-z-iteration}. In comparison, the MCI $\mathcal{X}_\infty$ has $8$ constraints: the initial $4$ halfspace state constraints and $4$ halfspaces from the recursion. We validate that $\Pi_x(\mathcal{Z}_\infty) = \mathcal{X}_\infty$ with a Hausdorff distance of zero as is shown in Fig.~\ref{img:mci-z-projected}, which confirms Lemma~\ref{lemma:z-recovery}(i). It takes $0.26$ seconds to compute the MSCI in $\mathbb{R}^3$ in three iterations, while it takes $0.19$ seconds to compute the MCI in $\mathbb{R}^2$ in two iterations.

\subsection{Learning $\mathcal{Z}_\infty$ from failures}

\begin{figure}[t]
  \centering
  % Top row
  \begin{subfigure}{0.49\linewidth}
    \centering
    \includegraphics[width=\textwidth]{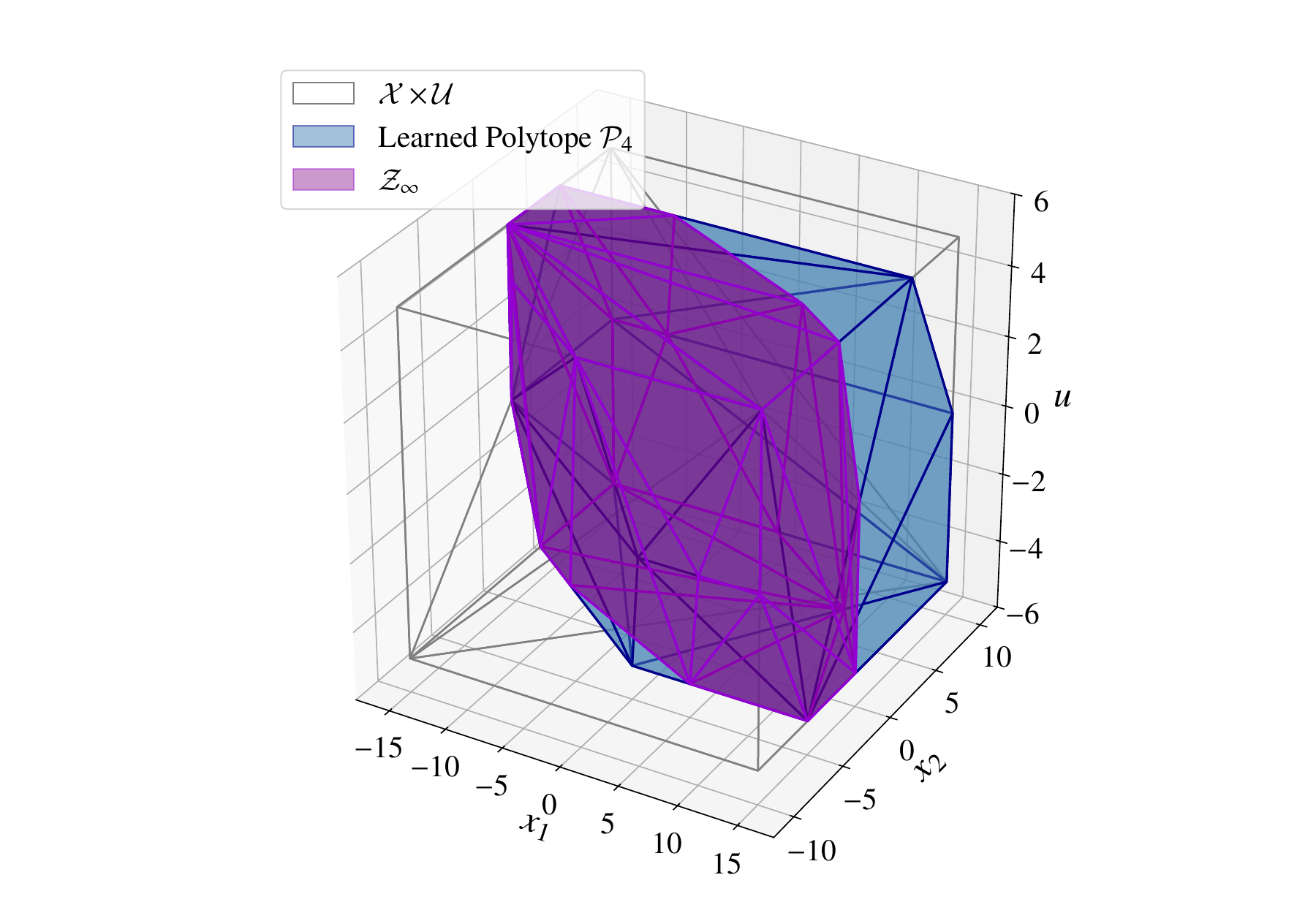}
\end{subfigure}
  \hfill
  \begin{subfigure}{0.49\linewidth}
  \centering
  \includegraphics[width=\textwidth]{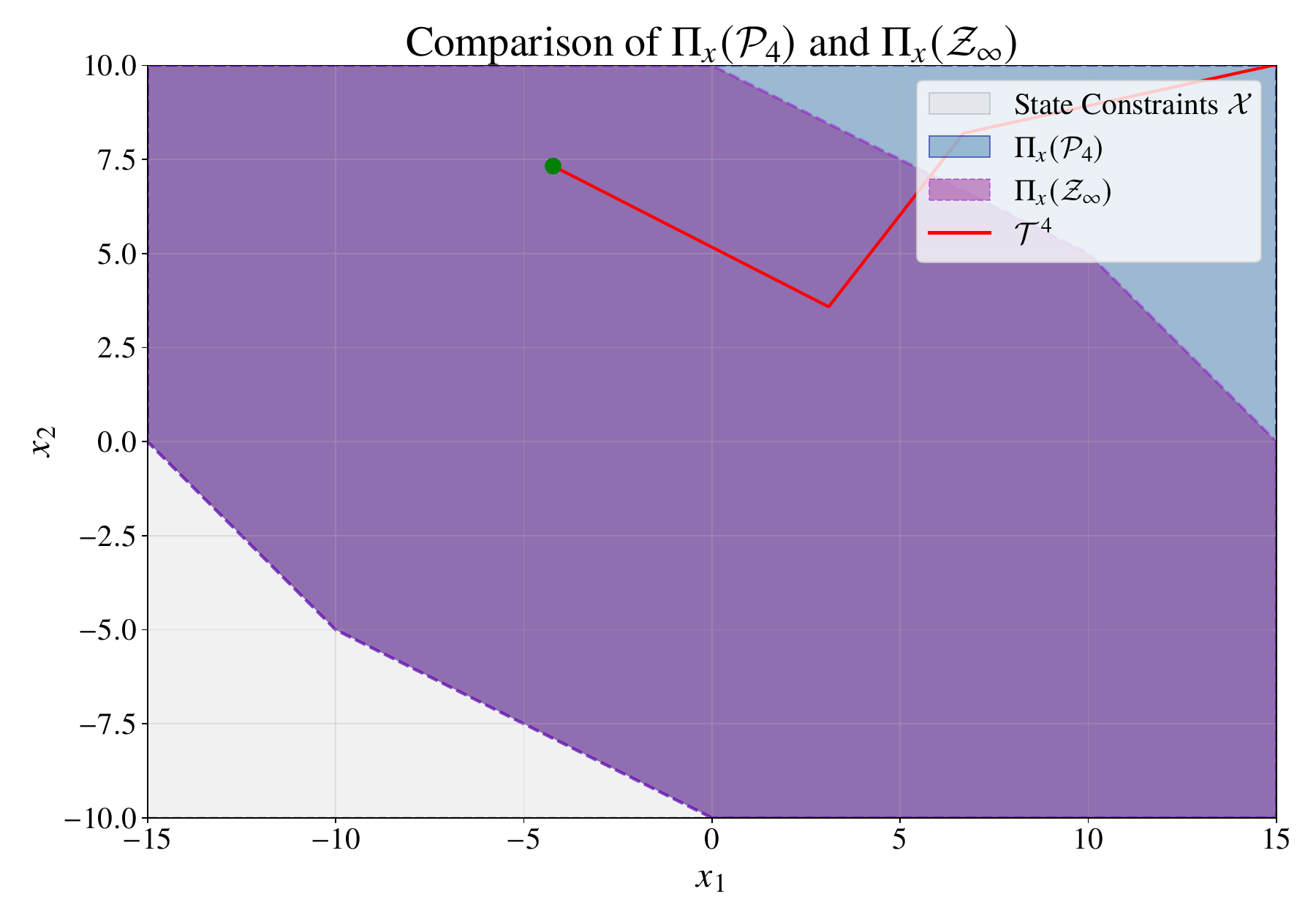}
\end{subfigure}
  \par
  % Bottom row
  \begin{subfigure}{0.49\linewidth}
  \centering
  \includegraphics[width=\textwidth]{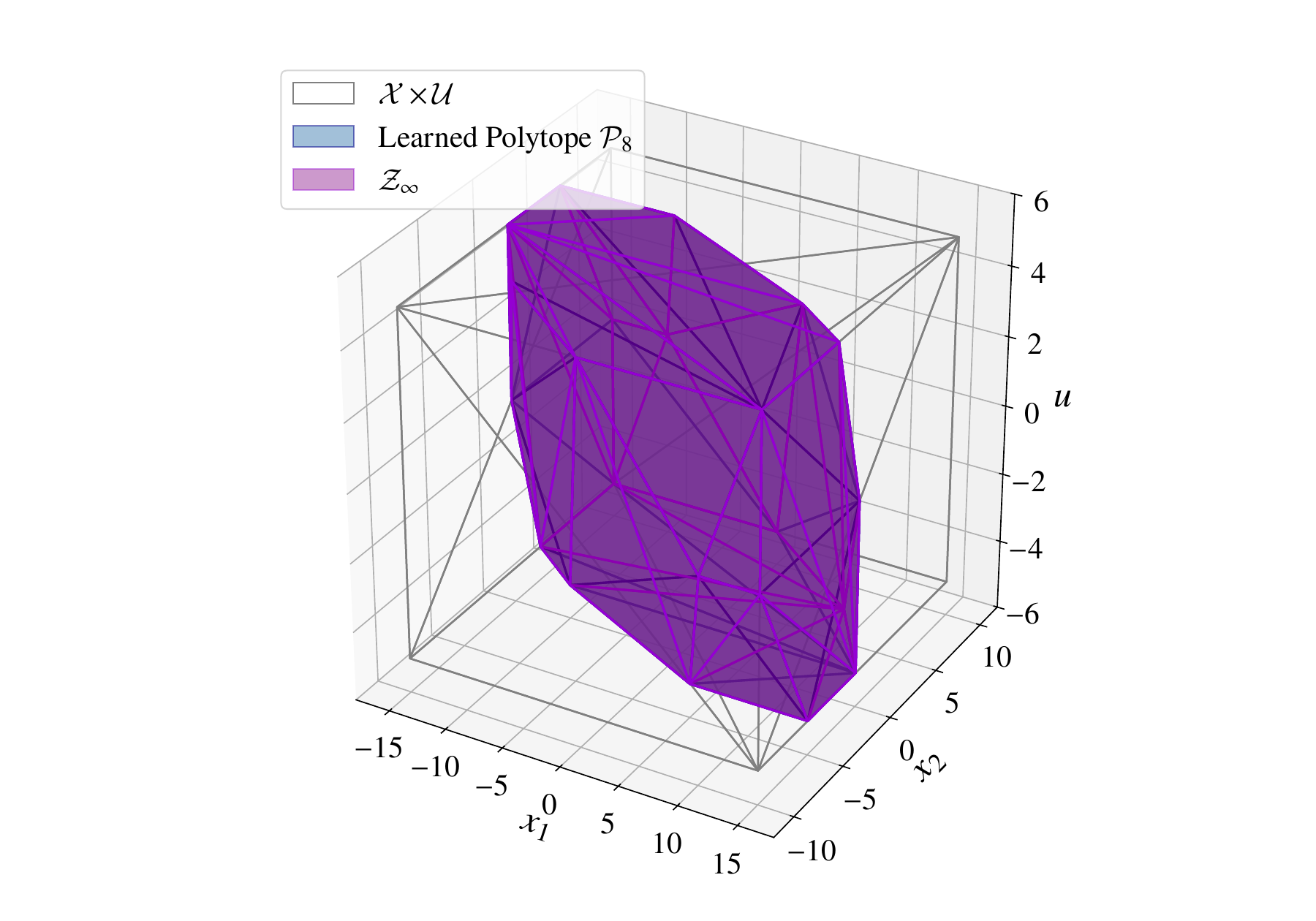}
\end{subfigure}
  \hfill
  \begin{subfigure}{0.49\linewidth}
  \centering
  \includegraphics[width=\textwidth]{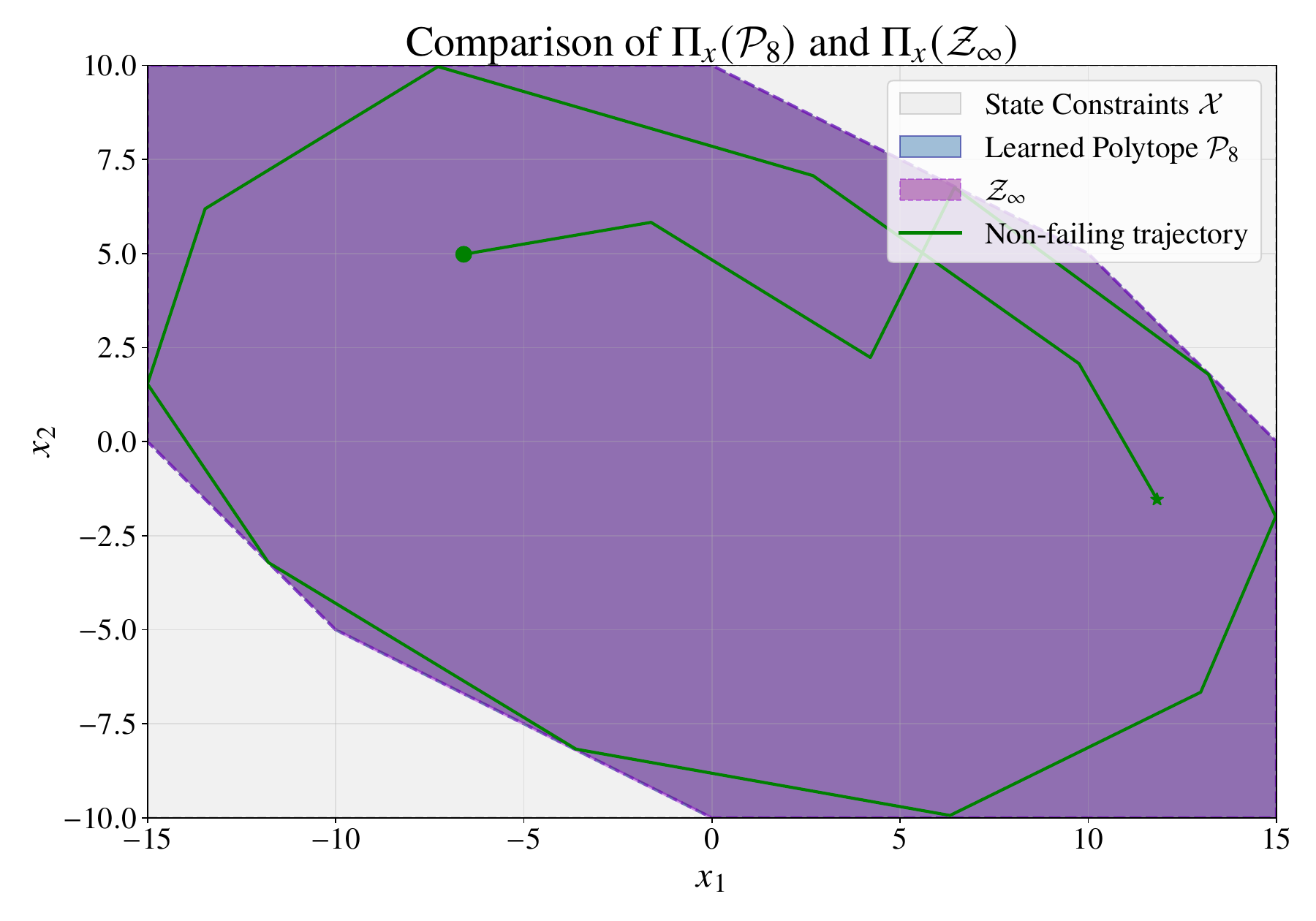}
\end{subfigure}
  \caption{Evolution of the learned polytope $\mathcal{P}^\ell,\ \ell\in\{0,\dots,8\}$. Top row: learned set $\mathcal{P}_\ell$ and its state-projection $\Pi_x(\mathcal{P}_\ell)$ at iteration $\ell=4$. Bottom row: final learned polytope $\mathcal{P}_8$ and its state-projection $\Pi_x(\mathcal{P}_8)$. Note that $\mathcal{P}_8$ coincides with $\mathcal{Z}_\infty$ while $\Pi_x(\mathcal{P}_8)$ coincides with $\mathcal{X}_\infty$.}
  \label{fig:fail-comparison}
  \vspace{-1.6em}
\end{figure}

We now validate Algorithm~\ref{alg:fail-learn} with $(A,B)$ unknown.
The algorithm is initialized with $\mathcal{P}_0$ and uses a sequence of failing trajectories generated by different controllers.

\textbf{Control Policies.}
Two types of controllers are used to generate failures.
First, two open-loop constant-input controllers, $u(k) \equiv u_{\min} = -5$ and $u(k) \equiv u_{\max} = 5$, are applied with $x_0 = \begin{bmatrix} 0 & 0 \end{bmatrix}^\top$ for $T = 15$ steps each. Next, trajectories are generated by a random admissible controller for $T = 15$ steps each. At each iteration $\ell \in \mathbb{Z}^+$, the initial state $x_0$ is sampled from the current projected polytope $\Pi_x(\mathcal{P}^\ell)$, then, at each time step $k \in \mathbb{Z}^+$, the control input $u(k)$ is sampled uniformly from $\mathcal{U}_\ell(x(k))$. In both cases, the trajectories are terminated when the state first exits $\Pi_x(\mathcal{P}^\ell)$.

\textbf{Convergence of $\mathcal{P}_L$ to $\mathcal{Z}_\infty$.}
From $6$ failing trajectories, FAIL learns the $8$ predecessor halfspaces added to the initial $6$ constraints defined by $\mathcal{P}_0$, recovering all $14$ halfspaces of $\mathcal{Z}_\infty$ in $8$ iterations. It takes $0.0353$ seconds to learn the $8$ halfspaces. For every iteration $\ell$, we validate that $\mathcal{Z}_\infty \subseteq \mathcal{P}_\ell$ and that each learned halfspace is a halfspace constraint of $\mathcal{Z}_\infty$ to within numerical tolerance ($<10^{-3}$), as guaranteed by Lemma~\ref{lemma:exact-recovery}. Therefore, we have that $\mathcal{P}_L = \mathcal{Z}_\infty$ at termination. To terminate Algorithm~\ref{alg:fail-learn}, we check that no trajectory under the random admissible control policy exits $\Pi_x(\mathcal{P}_\ell)$, testing up to $1200$ trajectories at each iteration $\ell$. Fig.~\ref{fig:fail-random-no-exit} shows the $1200$ trajectories at the termination iteration $L = 8$, none of which exit $\Pi_x(\mathcal{P}_L)$. Finally, we validate that $\mathcal{P}_L$ converges to $\mathcal{Z}_\infty$ with a Hausdorff distance of 0, consistent with Theorem~\ref{theorem:monotone-convergence}. Note that this also shows that $\mathcal{U}_L(x) = \mathcal{U}^f_\infty(x) \cap \mathcal{U}(x)$, and that $\mathcal{U}^f_\infty(x) \cap \mathcal{U}(x)$ is the set of invariance preserving inputs, consistent with Theorem~\ref{theorem:monotone-convergence} and Lemma~\ref{lemma:z-recovery} (ii).

\textbf{Extracting $\mathcal{X}_\infty$.}
Fig.~\ref{fig:fail-comparison} compares the state projection of the learned polytope $\Pi_x(\mathcal{P}_L)$ with $\mathcal{X}_\infty$. The state projection $\Pi_x(\mathcal{P}_L)$ coincides with $\mathcal{X}_\infty$, with a Hausdorff distance of zero. This demonstrates that FAIL also learns the MCI. 

\section{Conclusion}
\label{sec:conclusion}
In this paper, we introduced the concept of state-control invariance and developed a failure-aware iterative learning algorithm for computing the MSCI from observed failures. We proved that the MSCI $\mathcal{Z}_\infty$ simultaneously yields the MCI set through its state projection and the admissible invariance-preserving inputs through its $x$-sections. The FAIL algorithm learns the halfspaces defining $\mathcal{Z}_\infty$ from one-step failing state-input pairs, converging monotonically without requiring system identification. Future research will focus on extending the framework to LTI systems with Gaussian noise.

\begin{figure}[!tb]
  \centering
  \includegraphics[width=0.95\linewidth]{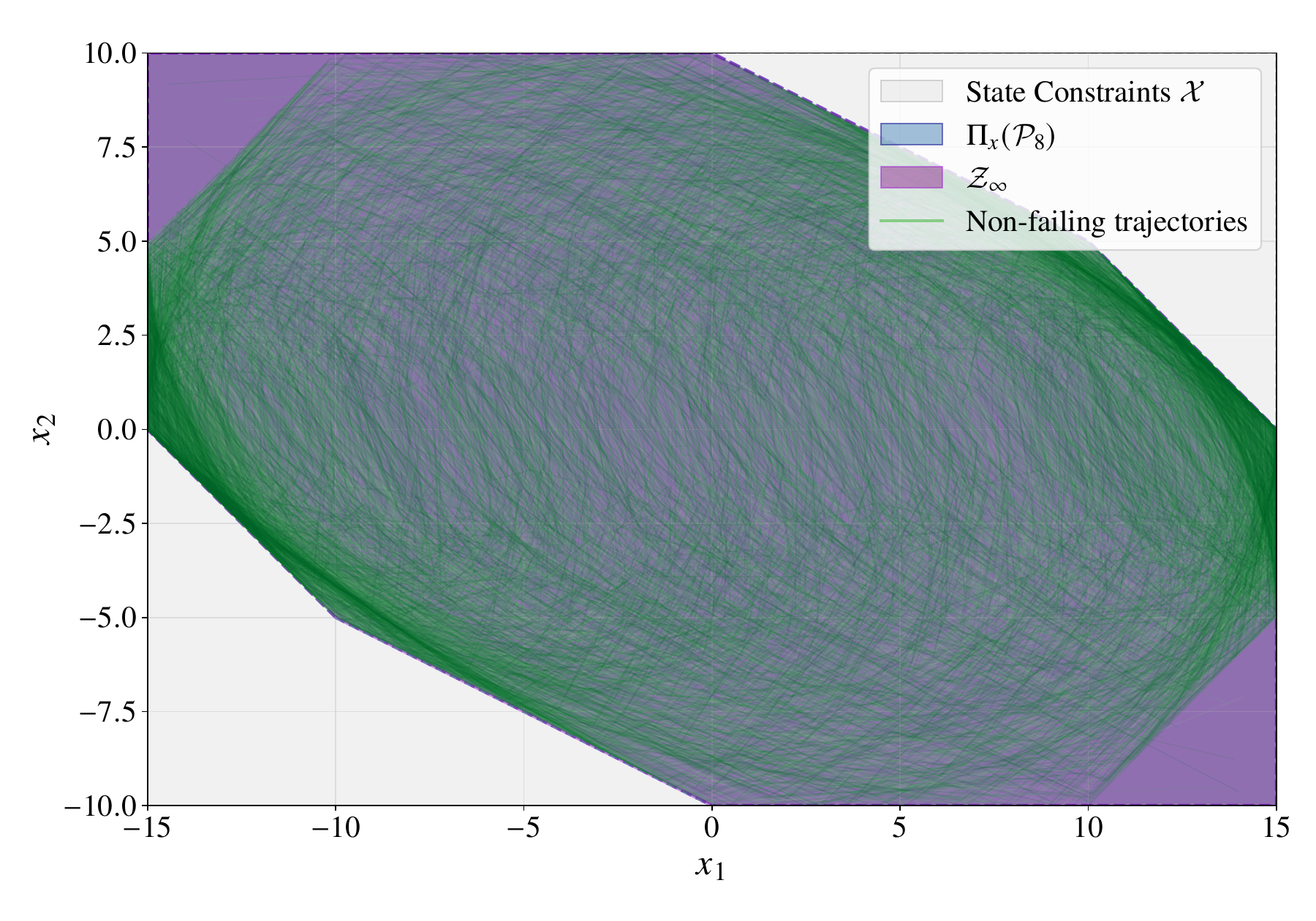}
  \caption{Non-failing trajectories generated by the random admissible controller at the final iteration $L = 8$. Each trajectory remains within $\Pi_x(\mathcal{P}_8)$.}
  \label{fig:fail-random-no-exit}
  \vspace{-1.6em}
\end{figure}

\section*{GenAI Disclosure}
The Claude chatbot by Anthropic~\cite{claude2025} was used to improve the syntax and grammar of the manuscript.
  \vspace{-0.6em}

\bibliographystyle{IEEEtran}
\balance
\bibliography{references}             % bib file to produce the bibliography

\end{document}